\newtheorem{theorem}{Theorem}
\newtheorem{proposition}[theorem]{Proposition}
\newtheorem{lemma}[theorem]{Lemma}
\theoremstyle{definition}
\newtheorem{definition}[theorem]{Definition}
\newtheorem{example}[theorem]{Example}
\newtheorem{cor}[theorem]{Corollary}
\newcommand{\N}{{\mathbb{N}}}
\newcommand{\lex}{\mathit{lex}}
\newcommand{\rlex}{\mathit{co\text{-}lex}}
\newcommand{\colex}{\rlex}
\newcommand{\key}{\mathit{key}}
\newcommand{\first}{\mathit{first}}
\newcommand{\last}{\mathit{last}}
\newcommand{\suc}{\mathit{succ}}
\newcommand{\pmx}{{\sf{Pmax}}}
\newcommand{\pmn}{{\sf{Pmin}}}
\newcommand{\rpmx}{{\sf{rev\_Pmax}}}
\newcommand{\rpmn}{{\sf{rev\_Pmin}}}
\newcommand{\ignore}[1]{{}}
\title{A Cycle Joining Construction of the  Prefer-Max De Bruijn Sequence}
\author[1]{Gal Amram}
\author[2]{Amir Rubin}
\author[2]{Gera Weiss}
\date{}
\affil[1]{School of Computer Science, Tel Aviv University}
\affil[2]{Department of Computer Science, Ben-Gurion University of The
	Negev}
\begin{document}
	\maketitle
	\sloppy
	
	\abstract{We propose a novel construction for the well-known prefer-max De Bruijn sequence, based on the cycle joining technique. We further show that the construction implies known results from the literature in a straightforward manner. First, it implies the correctness of the onion theorem, stating that, effectively, the reverse of prefer-max is in fact an infinite De Bruijn sequence. Second, it implies the correctness of recently discovered shift rules for prefer-max, prefer-min, and their reversals. Lastly, it forms an alternative proof for the seminal FKM-theorem.}
	
	\section{Introduction}
	
	For $n>0$ and an alphabet $[k]=\{0,\dots,k{-}1\}$ where $k>0$, an $(n,k)$-De Bruijn sequence~\cite{bruijn1946combinatorial,Flye1894} ($(n,k)$-DB sequence, for abbreviation) is a total ordering of all words of length $n$ over $[k]$ such that the successor of each word is obtained by omitting its first symbol, and concatenating a final symbol to it instead. This applies also for the last and first words in the ordering. DB sequences admit applications in various fields including cryptography~\cite{AlhakimA11,AlhakimN17,DongP17,MykkeltveitS13,Turan11}, electrical
	engineering (mainly since they correspond to feedback-shift-registers)~\cite{Annexstein97,ChangTIHS1993,ChangELW18, lempel1970homomorphism,LiZLH14,LiZLHL16,MiroschnykKDKKF2018}, molecular biology~\cite{LinWJS11}, and neuroscience~\cite{AguirreMM11}, to name a few.
	
	The number of $(n,k)$-DB sequences (up to rotations) is $\frac{({k!}^{k^{n{-}1}})}{k^n}$~\cite{aardenne1951circuits}. Beyond this enumeration, many DB sequences were discovered for $k=2$ (e.g.~\cite{Alhakim10,DragonBH2016,Etzion1987,GabricS2017,Huang90,sawada2016surprisingly}). However, for the non-binary case, the number of known constructions is  smaller~\cite{AmramARSSW19,AmramRSW20,DragonHSWW18,FredricksenM1978,SAWADA2017524}.

	There are several standard construction techniques for DB sequences. First, a common technique addresses the notion of a De Bruijn graph (DB graph)~\cite{bruijn1946combinatorial,Good46}: the directed graph over all $n$-length words with edges of the form $((\sigma w),(w\tau))$. Clearly, DB sequences are the Hamiltonian cycles in the DB graph. Hence, generating Hamiltonian cycles in a DB graph forms a construction technique for DB sequences~\cite{AlhakimN17,BecherH11,LinWJS11,MiroschnykKDKKF2018,Turan11}.   
	Second, some DB sequences are proved to be equal to a concatenating of certain words~\cite{DragonHSWW18,FredricksenM1978}. Third, some DB sequences are constructed by a \emph{shift rule}; a formula to produce the successor of a given word~\cite{AmramARSSW19,AmramRSW20,DragonHSWW18,gabric2018framework,GabricSWW20,sawada2016surprisingly,SAWADA2017524}.
	
	This paper employs a fourth common construction technique: a \emph{cycle joining} (a.k.a. \emph{cross-join}) construction~\cite{AlhakimA11,AlhakimN17,Annexstein97,ChangTIHS1993,Fredricksen75,GabricSWW20,HaugeH96,JansenFB91,MykkeltveitS13,ZhangLH09}. Roughly speaking, in this technique, the $n$-length words are divided into the equivalence classes of the `rotation of' equivalence relation. Then, each equivalence class is ordered in a way that satisfies the successor property (hence, the classes are named cycles), and the cycles are  ``inserted" one into the other, so that the successor property holds for the entire obtained sequence.
	
	The most well-known and studied DB sequence is the prefer-max sequence~\cite{ford1957cyclic,martin1934problem}, and thus its complementary, the prefer-min sequence. These are the lexicographically maximal and minimal DB sequences. The prefer-max is generated by the ``granddaddy" greedy algorithm~\cite{martin1934problem} that repeatedly chooses the lexicographic maximal legal successor that have not been added yet. An analogous process produces the prefer-min sequence.
	
	Beyond this greedy approach, three constructions for the prefer-max sequence are known:
	\begin{enumerate}
		\item The seminal FKM theorem~\cite{FredricksenM1978} shows that prefer-max is a concatenation of certain Lyndon words~\cite{lyndon1954burnside}, i.e. non-periodic words that are lexicographically smaller among their rotations.
		
		\item A shift rule~\cite{AmramARSSW19,AmramRSW20,GabricSWW20} (a shift rule for $k=2$ is given in~\cite{fredricksen1972generation,weiss2007combinatorial}).
		
		\item A cycle joining construction~\cite{GabricSWW20}. 
	\end{enumerate}
	
	In~\cite{AmramARSSW19,GabricSWW20}, the shift rule was proved based on the correctness of the FKM theorem. In~\cite{AmramRSW20}, the other direction is established;~\cite{AmramRSW20} provides a self-contained correctness proof for the prefer-max shift rule, and show that the FKM theorem follows from it. Furthermore, both results, the FKM theorem, and the prefer-max shift rule, provide a cycle joining construction in a straightforward manner. To extract a cycle joining construction from a shift rule, one needs to identify the words whose successors/predecessors are not rotations of them. This is how a cycle joining construction was proved to generate prefer-max in~\cite{GabricSWW20}. A similar approach  can extract a  prefer-max cycle joining construction from the Lyndon words concatenation, i.e. the FKM theorem.
	
	This paper completes the missing ``triangle-edges". We provide a self-contained correctness proof for a cycle joining construction for prefer-max, see Section~\ref{sec:construction}. Then, we show that the prefer-max shift rule and thus the FKM theorem immediately follow from the correctness of our construction. Hence, we also provide an alternative proof for those two constructions of prefer-max, see Section~\ref{sec:implications}.
	
	To be precise, we provide a cycle joining construction for  the reverse of prefer-max. However, as we elaborate in the preliminaries, with reversing and complementing, the construction can be modified into constructions of prefer-max, prefer-min, and its reversal, see Section~\ref{sec:pre}. 
	
	Our construction follows a standard pattern. We order all cycles, and repeatedly insert the cycles into the sequence constructed so far. This approach differs from~\cite{GabricSWW20}, in which the set of all ``cycle-crossing" edges is identified. Note that that set can be extracted from the prefer-max shift rule. The pattern we follow allows us to easily conclude the \emph{onion theorem}~\cite{SchwartzSW19} (see also note in~\cite{BecherC21}). That is, the prefer-max over $[k{+}1]$ is a suffix of the prefer max over $[k]$ and hence, effectively, the reverse of prefer-max is an infinite DB sequence over the alphabet $\mathbb N$, see Section~\ref{sec:implications}.

	\section{Preliminaries}
	\label{sec:pre}
	
	We focus on alphabets of the form $\Sigma=[k]$ where
	$[k]=\{0,\dots,k{-}1\}$ and $k>0$, or $\Sigma=\mathbb N$. Naturally, the symbols in $\Sigma$ are totally ordered by $0<1<\cdots$.
	A word over $\Sigma$ is a sequence of symbols. Throughout the paper,  non-capital English letters denote words (e.g. $u,w,x,w'$ etc.), and non-capital Greek letters denote alphabet symbols (e.g. $\sigma,\tau,\sigma'$ etc.). $|w|$ denotes the length of a word $w$, and we say that $w$ is an $n$-word if $|w|=n$. $\varepsilon$ is the unique $0$-word. For a word $w$, $w^0=\varepsilon$, and $w^{t{+}1}=ww^t$.  $\Sigma^n$ is the set of all $n$-words over $\Sigma$. $R$ is the reverse operator over words, i.e. $R(\sigma_0\sigma_1\cdots\sigma_{n{-}2}\sigma_{n{-}1})=(\sigma_{n{-}1}\sigma_{n{-}2}\cdots\sigma_1\sigma_0)$.

	An $(n,k)$-DB sequence is a total ordering of $[k]^n$ satisfying: (1) a word of the form $\tau w$ is followed by a word of the form  $w\sigma$; (2) if the last word is $\tau x$, then the first word is of the form $x\sigma$.\footnote{ An $(n,k)$-DB sequence is also commonly defined as a (cyclic) sequence of $k^n$ symbols $\sigma_0\sigma_1\cdots$, for which every $n$-word appears in it as a subword. This presentation is essentially identical to ours by writing 
		$w_i =(\sigma_{i{-}(n{-}1)\mod k^n})\cdots (\sigma_{i{-}1\mod k^n}) \sigma_{i}$.
		%$w_i=\sigma_i %\sigma_{i{+}1\mod %k^n}\cdots\sigma_{i{+}n{-}1\%mod k^n}$
	}  We also consider infinite De Bruijn sequences. An $n$-DB sequence is a total ordering of  $\mathbb N^n$ that satisfies condition (1) above.
	
	For $n,k>0$, the prefer-max (resp. prefer-min)  $(n,k)$-DB sequence, denoted $\pmx(n,k)$ (resp. $\pmn(n,k)$), is the sequence constructed by the greedy algorithm that starts with $w_0=0^{n{-}1}{(k{-}1)}$ (resp. $(k{-}1)^{n{-}1}0$), and repeatedly chooses, while possible, the next word $w_i=x\tau$ where  $\tau$ is the maximal (resp. minimal) symbol such that property (1) holds, and $x\tau$ was not chosen in an earlier stage.
	\begin{example} 
		\label{exmp:prefmax}
		$\pmx(3,3)$ is the sequence:
		\begin{quote}
			$002,022, 222, 221, 212,  122,  220, 202, 021, 211, 112, 121,210, 102, 020,  201, \linebreak 012,  120, 200, 001, 011, 111,110,101, 010,100,000$.
		\end{quote}
		$\pmn(3,3)$ is the sequence:
		\begin{quote}
			$220,200, 000, 001, 010,  100,  002, 020, 201, 011, 110, 101,012, 120, 202,  021,\linebreak 210,  102, 022, 221, 211, 111,112,121, 212,122,222$.
		\end{quote}
		
	\end{example}
	Let $\rpmx(n,k)$ (resp. $\rpmn(n,k)$) be the reverse of $\pmx(n,k)$ (resp. $\pmn(n,k)$). That is, the sequence obtained by taking the $i$th word $u_i$ to be $R(w_{k^n{-}1{-}i})$, i.e. the reverse of the $k^n{-}1{-}i$ word of $\pmx(n,k)$ (resp. $\pmn(n,k)$).
	
	\begin{example} 
		\label{exmp:rev-prefmax}
		$\rpmx(3,3)$ is the sequence:
		\begin{quote}
			$000,001, 010, 101, 011,  111,  110, 100, 002, 021, 210, 102, 020, 201, 012,  121,\linebreak 211,  112, 120, 202, 022, 221, 212, 122, 222, 220, 200$.

		\end{quote}
		$\rpmn(3,3)$ is the sequence:
		\begin{quote}
			$222,221, 212, 121, 211,  111,  112, 122, 220, 201, 012, 120, 202, 021, 210,  101,\linebreak 011,  110, 102, 020, 200, 001, 010, 100, 000, 002, 022$.
		\end{quote}
	\end{example}
	
	Observe that $\pmx(n,k)$ and $\pmn(n,k)$ (and likewise $\rpmx(n,k)$ and $\rpmn(n,k)$) are derived one from the other by replacing each symbol $\sigma$ with $k{-}1{-}\sigma$. Hence, by reversing the sequence and subtracting the indices from $k{-}1$, properties and constructions for one of those sequences translate to corresponding properties and constructions for all other three.
	
	The fundamental FKM-theorem~\cite{fredricksen1977lexicographic,FredricksenM1978} links between the $\pmn$ sequence and Lyndon words~\cite{lyndon1954burnside}, as we elaborate below. A word $w'$ is a \emph{rotation} of $w$ if $w=xy$ and $w'=yx$. A word $w$ is \emph{periodic} if $w=x^t$ where $t>1$. A word  $w\neq \varepsilon$ is a \emph{Lyndon-word} if it is non-periodic, and lexicographically-minimal among its rotations. Let $\leq_\lex$ denote the lexicographic ordering of words.
	\begin{theorem}[FKM-Theorem.~Fredricksen, Kessler, Maiorana]
		\label{thm:FKM}
		Let $L_0{<_\lex} L_1{<_\lex}\cdots$ be all Lyndon-words over $[k]$ whose length divides $n$, ordered lexicographically. Then, $\pmn(n,k)=L_0L_1\cdots$.
		
	\end{theorem}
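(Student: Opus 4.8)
The plan is to prove the FKM theorem as a consequence of the cycle-joining construction developed in the paper, rather than from scratch. Since the paper's main construction targets $\rpmx$, and since the preliminaries record that $\pmn$, $\pmx$, $\rpmn$, $\rpmx$ are all interderivable by reversal and by the symbol swap $\sigma \mapsto k{-}1{-}\sigma$, I would first translate the Lyndon-word statement through that dictionary: a Lyndon word over $[k]$ (lex-minimal in its rotation class) becomes, under $\sigma \mapsto k{-}1{-}\sigma$, a rotation-class representative that is lex-\emph{maximal}, and under reversal the concatenation order $L_0 L_1 \cdots$ becomes a specific order on the reversed necklace representatives. So the first step is bookkeeping: state precisely which ordered concatenation of which rotation-class representatives the construction's output sequence $\rpmx(n,k)$ should equal, so that untwisting it yields Theorem~\ref{thm:FKM}.

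Next I would set up the correspondence between Lyndon words dividing $n$ and the cycles of the construction. Each cycle is a rotation class of an $n$-word; a rotation class whose primitive period is $d \mid n$ corresponds to exactly one Lyndon word of length $d$ (its lex-least rotation), written $n/d$ times. The key structural fact I need from Section~\ref{sec:construction} is the identification, for each cycle, of its ``entry'' and ``exit'' words — the unique word in the cycle whose predecessor in the final sequence lies outside the cycle, and the unique word whose successor lies outside the cycle. For the FKM theorem the crucial claim is that, once the cycles are listed in the construction's order and spliced together, reading off the sequence and grouping consecutive entries produces precisely the blocks $L_i$ (or their swap/reverse images). Concretely, I expect that the exit word of the cycle for a Lyndon word $L$ is the rotation of $L^{n/d}$ ending in the appropriate extreme symbol, and that its successor is the entry word of the next cycle in lex order, so that the spliced traversal walks through each cycle along the unique path that spells $L^{n/d}$ read starting at the right offset. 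Verifying this requires unwinding the definition of the cycle ordering and the insertion rule from Section~\ref{sec:construction}.

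Then I would run an induction on the number of cycles inserted, or equivalently a direct argument: show that the word sequence obtained by concatenating, in the construction's cycle order, the de-Bruijn-style traversal of each cycle, has first word $0^{n-1}(k{-}1)$ (after the appropriate swap/reverse, $w_0$ of $\pmx$), has the successor property across every splice point (this is exactly what ``cycle joining is valid'' gives us, already proved in Section~\ref{sec:construction}), and therefore by uniqueness of the greedy prefer-max/prefer-min sequence coincides with it. Having identified the sequence of $n$-words, reading the last symbols of consecutive words recovers the symbol sequence, and the block structure gives the Lyndon concatenation. The step where ``grouping the last symbols of the words traversed within one cycle yields exactly $L^{n/d}$ and hence $n/d$ copies of $L$'' is where the alignment of entry/exit words matters and must be checked carefully.

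The main obstacle, I expect, is not the high-level reduction but pinning down that the construction's ordering of cycles restricts, on cycles of period $d$, to the lexicographic order on the corresponding Lyndon words, \emph{and} that the entry/exit words are positioned so the traversal spells $L^{n/d}$ starting at the phase that makes the concatenation seamless (i.e. the suffix of one block overlaps correctly with the prefix of the next). In other words, the hard part is matching the combinatorics of ``which rotation of $L^{n/d}$ do we enter the cycle at'' with the greedy choice, and confirming that the reversal/complement dictionary sends ``lex-minimal rotation, blocks in increasing lex order'' on the $\pmn$ side to exactly the order the construction uses on the $\rpmx$ side. Once that lemma is in place, Theorem~\ref{thm:FKM} follows by reading off the constructed sequence and applying the preliminaries' symbol-swap/reversal equivalence.
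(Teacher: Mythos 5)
Your high-level plan (derive FKM from the cycle-joining construction via the reversal/complement dictionary) matches the paper's intent, but the paper takes a different and much shorter route: it first extracts the shift rule $\suc$ for $\rpmx(n,k)$ from the construction (Theorem~\ref{thm:rpmx-shift-rule}), transports it by complementation and reversal into a shift rule $\mathit{next}^{-1}$ for $\pmn(n,k)$, identifies that rule with the one of~\cite{AmramARSSW19}, and then cites the proof of Theorem~4 of~\cite{AmramARSSW19} for the fact that this shift rule generates $L_0L_1\cdots$. The combinatorial link between the successor rule and the Lyndon factorization is thus outsourced to a citation, whereas you propose to establish it directly from the cycle structure.

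The direct argument you sketch has a genuine gap. Its central claim --- that the final sequence visits each cycle as a contiguous block, so that grouping the last symbols of the words traversed within one cycle yields the corresponding Lyndon block --- fails on two counts. First, the cycles are \emph{nested}, not concatenated: by the parenthesis property (Corollary~\ref{cor:parenthesis}) a cycle $C_r$ may be embedded strictly inside $C_m$, so the words of $C_m$ are scattered through the sequence (in $\pmn(3,3)$, for instance, the rotation class of $012$ occupies positions $8,12,13$, and that of $022$ occupies positions $0,14,18$). Second, even per cycle the accounting does not line up with the Lyndon blocks: a cycle of primitive period $d$ contains $d$ words and hence contributes $d$ last-symbols (one copy of $L$, not $L^{n/d}$ as you write --- $L$ appears exactly once in the FKM concatenation), and these $d$ symbols are \emph{not} the block $L$ of the factorization, because the length-$n$ window ending inside block $L_j$ generally begins inside earlier blocks. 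Concretely, the block $L_1=001$ of $\pmn(3,3)$ is spelled by the last symbols of $200,000,001$, which lie in three different cycles. Repairing this requires exactly the offset/phase analysis that constitutes the substance of the FKM proof (this is what~\cite[Theorem~4]{AmramARSSW19} supplies), so your outline as written does not close the argument.
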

	
	As we present a construction for $\rpmx(n,k)$, the co-lexicographic ordering is of importance to us. That is, we write $w_1\leq_\rlex w_2$ if $w_1$ is a suffix of $w_2$, or we can write
	\[w_1=y_1\sigma x, \ w_2=y_2\tau x, \text{ such that }\sigma< \tau.\] Equivalently, $\colex$ can be defined by: $w_1\leq_\rlex w_2$ if $R(w_1)\leq_\lex R(w_2)$.

	\section{Cycle Joining Construction}
	\label{sec:construction}
	
	In this section we define the cycle joining construction for $\rpmx(n,k)$, and prove its correctness. We further conclude that $\rpmx$ is an infinite DB sequence. We obtain these results as follows. For every $n,k>0$ we construct an $(n,k)$-DB sequence, $D(n,k)$. We then show that our construction yields an $n$-DB sequence, $D(n)$, as $D(n,k)$ is a prefix of $D(n,k{+}1)$. Finally, we prove that $D(n,k)=\rpmx(n,k)$.
	
	\subsection{The Construction} 
	A key-word is an $n$-word that is $\colex$ maximal among its rotations.\footnote{This is inspired by the notion of Lyndon words - non-periodic words that are lexicographically minimal among their rotations. In our definition of a key-word we use $\colex$ ordering, take the maximal element, and do not require periodicity. } Let $\key_0,\key_1,\key_2,\dots$ be an enumeration of all key-words in $\rlex$ order. That is, $\key_0<_\rlex \key_1<_\rlex \cdots$. Let $c(n,k)$ be the number of all key-words of length $n$ over $[k]$.

	The cycle of $\key_m$ is a sequence $C_m$ whose elements are all $\key_m$-rotations, ordered as follows: If $m=0$, $C_m=(0^n)$. For $m>0$, let $\key_m=0^l(\sigma{+}1)w$. $w0^l(\sigma{+}1)$ is the first word in $C_m$, and each word $\sigma w'$ in $C_m$ is followed by $ w'\sigma$. Hence, the last word in $C_m$ is $(\sigma{+}1)w0^l$. We thus define the corresponding functions below. 
	\begin{definition}
		For $m>0$ let:
		\begin{itemize}
			\item $\key(C_m)= \key_m= 0^l(\sigma{+}1)w$;
			\item $\first(C_m)=w0^l(\sigma{+}1)$;
			\item $\last(C_m)= (\sigma{+}1)w0^l$.
		\end{itemize}
		Additionally, $\key(C_0)=\first(C_0)=\last(C_0)= 0^n$.
	\end{definition}
	
	\begin{example}
		For $n=6$, take $m$ such that $\key_m=002012$. Hence, $C_m=(\first(C_m)=012002,120020,200201,\key(C_m)=002012,020120, \last(C_m)=201200)$.
	\end{example}
	
	Note that $C_0,\dots,C_{c(n,k){-}1}$ partition the set $[k]^n$.
	
	We are ready to present the cycle joining construction. We inductively define an ordering of all words, where at step $m$ we extend the ordering of the elements in $\bigcup_{i=0}^{m{-}1} C_i$ to include the elements of $C_m$, {\color{blue} while} respecting the ordering within $C_m$ defined above.
	
	\begin{definition}[Construction Rule]
		\label{def:construction-rule}
		For each $m<c(n,k)$, we  inductively define a sequence $D_m$, over the words $\bigcup_{i=0}^m C_i$, as follows: 
		\begin{itemize}
			\item $D_0=(0^n)$.
			\item Write $\key_{m{+}1}=0^l(\sigma{+}1)w$. $D_{m{+}1}$ is obtained by inserting the sequence $C_{m{+}1}$ immediately after the word $\sigma w 0^l\in D_m$. 
			 Note that the key-word that is a rotation of $\sigma w 0^l$ is $\colex$ smaller than $0^l(\sigma{+}1)w$.
				Hence, indeed $\sigma w 0^l \in D_m$.
		\end{itemize}
		Let $D(n,k)=D_{c(n,k){-}1}$.
	\end{definition}

	\ignore{
	\begin{figure}
		\centering
		\begin{tikzpicture}[->,>=stealth',shorten >=1pt,auto,node distance=1.5cm,semithick,minimum size=0cm]

			% first line
			%C0
			\node[] (000) {$000$};
			%C1
			\node[right = 15pt of 000] (001) {$001$};
			\node[right = 7pt of 001] (010) {$010$};
			\node[below right = 5pt and -8.5pt of 001] (100) {$100$};
			%C2
			\node[right = 15pt of 010] (101) {$101$};
			\node[right = 7pt of 101] (011) {$011$};
			\node[below right = 5pt and -8.5pt of 101] (110) {$110$};
			%C3
			\node[right = 15pt of 011] (111) {$111$};

			\draw [->,color=red] (000) edge  (001);
			\draw [->,color=red] (001) edge[bend left=75]  (010);
			\draw [->] (010) edge[bend left=40]  (100);

			\draw [->,color=red] (010) edge  (101);
			\draw [->,color=red] (101) edge[bend left=75]  (011);
			\draw [->] (011) edge[bend left=40]  (110);

			\draw [->,color=red] (110) edge[bend left=10]  (100);
			\draw [->,color=red] (011) edge  (111);
			\draw [->,color=red] (111) edge[bend left=30]  (110);

			% second line
			%C4
			\node[below right = 55pt and -5pt of 001] (002) {$002$};
			\node[right = 7pt of 002] (200) {$200$};
			\node[below right = 5pt and -8.5pt of 002] (020) {$020$};
			
			%C5
			\node[left = 45pt of 002] (210) {$210$};
			\node[right = 7pt of 210] (021) {$021$};
			\node[below right = 5pt and -8.5pt of 210] (102) {$102$};
			
			%c10
			\node[right = 15pt of 200] (222) {$222$};
			
			%C9
			\node[right = 15pt of 222] (122) {$122$};
			\node[right = 7pt of 122] (212) {$212$};
			\node[below right = 5pt and -8.5pt of 122] (221) {$221$};

			\draw[->,color=blue] (100) edge[bend right = 20] (002);
			
			\draw[->] (002) edge[bend right =40] (020);
			\draw[->] (020) edge[bend right = 40] (200);
			\draw[->,color=blue] (002) edge (021);
			
			\draw[->,color=blue] (021) edge[bend right=75] (210);
			\draw[->,color=blue] (210) edge[bend right=40] (102);
			\draw[->,color=blue] (102) edge (020);

			%  third line
			%C6
			\node[below = 55pt of 002] (012) {$012$};
			\node[right = 7pt of 012] (201) {$201$};
			\node[below right = 5pt and -8.5pt of 012] (120) {$120$};
			
			%C7
			\node[below = 55pt of 210] (211) {$211$};
			\node[right = 7pt of 211] (121) {$121$};
			\node[below right = 5pt and -8.5pt of 211] (112) {$112$};

			%C8
			\node[right = 15pt of 201] (220) {$220$};
			\node[right = 7pt of 220] (022) {$022$};
			\node[below right = 5pt and -8.5pt of 220] (202) {$202$};

			\draw[->,color=blue] (020) edge[bend left = 50] (201);
			\draw[->,color=blue] (201) edge[bend right=75] (012);
			\draw[->] (012) edge[bend right=40] (120);
			
			\draw[->,color=blue] (012) edge (121);
			\draw[->,color=blue] (121) edge[bend right =75] (211);
			\draw[->,color=blue] (211) edge[bend right = 40] (112);
			\draw[->,color=blue] (112) edge (120);
			
			\draw[->,color=blue] (120) edge (202);
			\draw[->,color=blue] (202) edge[bend right=40] (022);
			\draw[->] (022) edge[bend right = 40] (220);
			
			\draw[->,color=blue] (022) edge[bend right=10] (221);
			\draw[->,color=blue] (221) edge[bend right=40] (212);
			\draw[->,color=blue] (212) edge[bend right = 75] (122);
			
			\draw[->,color=blue] (122) edge (222);
			\draw[->,color=blue] (222) edge (220);
			\draw[->,color=blue] (220) edge[bend right =10] (200);

			%  grid
			
			%first line
			\draw [-,dotted] (0.6,1.2) edge (0.6,-1.1);
			\draw [-,dotted] (3,1.2) edge (3,-1.1) ;
			\draw [-,dotted] (5.3,1.2) edge (5.3,-1.1) ;
			\draw [-,dotted] (-0.8,-1.1) edge (7,-1.1);
			
			\node[above = 10pt of 000] (c0) {$C_0$};
			\node[above = 29.3pt of 100] (c1) {$C_1$};
			\node[above = 29.3pt of 110] (c2) {$C_2$};
			\node[above = 10pt of 111] (c3) {$C_3$};

			%second line
			
			\draw [-,dotted] (1.3,-1.2) edge (1.3,-3.5);
			\draw [-,dotted] (3.6,-1.2) edge (3.6,-3.5);
			\draw [-,dotted] (5,-1.2) edge (5,-3.5);

			\node[above = 29.3pt of 020] (c4) {$C_4$};
			\node[above = 29.3pt of 102] (c5) {$C_5$};
			\node[above = 29.3pt of 221] (c9) {$C_9$};
			\node[above = 10pt of 222] (c10) {$C_{10}$};
			
			%third line
			
			\draw [-,dotted] (1.3,-3.6) edge (1.3,-6);
			\draw [-,dotted] (3.6,-3.6) edge (3.6,-6);
			\draw [-,dotted] (-0.8,-3.5) edge (7,-3.5);
			
			\node[above = 29.3pt of 120] (c6) {$C_6$};
			\node[above = 29.3pt of 112] (c7) {$C_7$};
			\node[above = 29.3pt of 202] (c8) {$C_8$};
			
		\end{tikzpicture}
		
		\caption{Our cycle joining construction. {\color{red}Red} arrows are $D(3,2)$ successors. {\color{blue}Blue} arrows extend $D(3,2)$ into $D(3,3)$. Black arrows are cycle-successors that are not $D(3,3)$ successors. }
		\label{fig:construction}
	\end{figure}
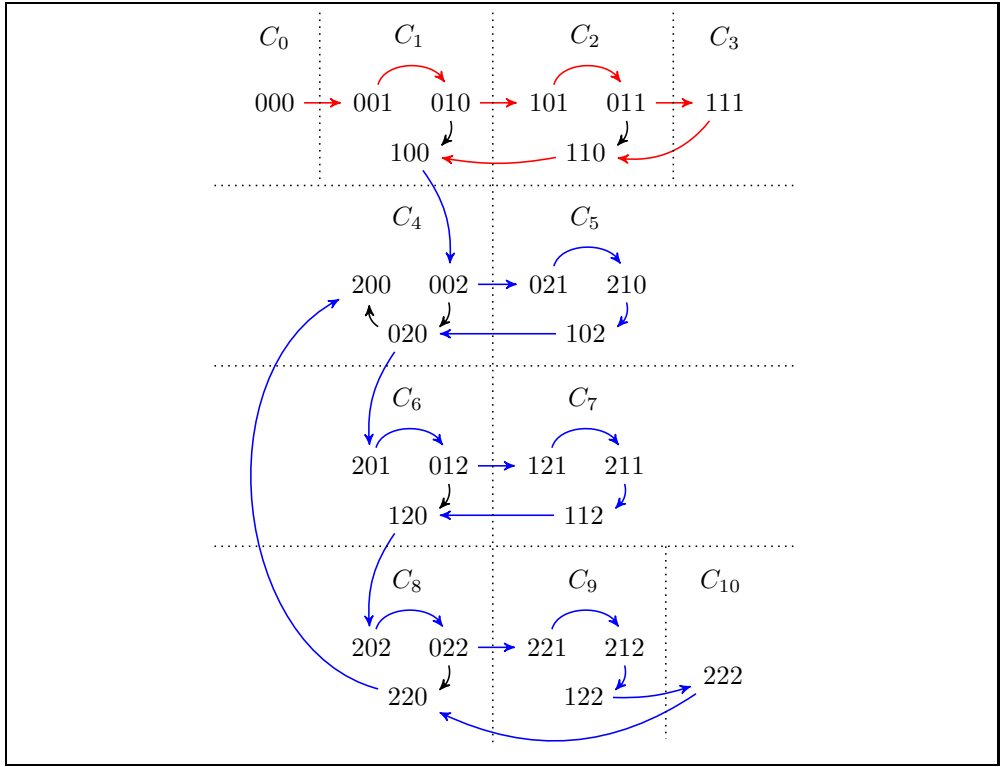}

	%****************************
	\begin{figure}
		\centering\framebox[0.8\textwidth]{
		\begin{tikzpicture}[->,>=stealth',shorten >=1pt,auto,node distance=1.5cm,semithick,minimum size=0cm]

			%                   first line
			
			%C0
			\node[] (000) {$000$};
			
			%C1
			\node[right = 15pt of 000] (001) {$001$};
			\node[right = 7pt of 001] (010) {$010$};
			\node[below right = 5pt and -8.5pt of 001] (100) {$100$};
			
			%C2
			\node[right = 15pt of 010] (101) {$101$};
			\node[right = 7pt of 101] (011) {$011$};
			\node[below right = 5pt and -8.5pt of 101] (110) {$110$};
			
			%C3
			\node[right = 15pt of 011] (111) {$111$};
			
			%c0-c1
			\draw [->,color=red] (000) edge  (001);
			
			%c1-c1
			\draw [->,color=red] (001) edge[bend left=75]  (010);
			\draw [->] (010) edge[bend left=40]  (100);
			
			%c1-c2
			\draw [->,color=red] (010) edge  (101);
			
			%c2-c2
			\draw [->,color=red] (101) edge[bend left=75]  (011);
			\draw [->] (011) edge[bend left=40]  (110);
			
			%c2-c1
			\draw [->,color=red] (110) edge[bend left=10]  (100);
			
			%c2-c3
			\draw [->,color=red] (011) edge  (111);
			
			%c3-c2
			\draw [->,color=red] (111) edge[bend left=30]  (110);

			%               second line
			
			%C5
			\node[below = 55pt of 101] (021) {$021$};
			\node[right = 7pt of 021] (210) {$210$};
			\node[below right = 5pt and -8.5pt of 021] (102) {$102$};
			
			%C4
			\node[left = 45pt of 021] (200) {$200$};
			\node[right = 7pt of 200] (002) {$002$};
			\node[below right = 5pt and -8.5pt of 200] (020) {$020$};
			
			%c1-c4
			\draw[->,color=blue] (100) edge[bend left = 20] (002);
			
			%c4-c4
			\draw[->] (002) edge[bend left =40] (020);
			\draw[->] (020) edge[bend left =40] (200);
			
			%c4-c5
			\draw[->,color=blue] (002) edge (021);
			
			%c5-c5
			\draw[->,color=blue] (021) edge[bend left =75] (210);
			\draw[->,color=blue] (210) edge[bend left =40] (102);
			
			%c5-c4
			\draw[->,color=blue] (102) edge (020);

			%                 third line
			
			%c6
			\node[below = 55pt of 200] (201) {$201$};
			\node[right = 7pt of 201] (012) {$012$};
			\node[below right = 5pt and -8.5pt of 201] (120) {$120$};
			
			%c7
			\node[right = 15pt of 012] (121) {$121$};
			\node[right = 7pt of 121] (211) {$211$};
			\node[below right = 5pt and -8.5pt of 121] (112) {$112$};
			
			%c4-c6
			\draw[->,color=blue] (020) edge[bend right = 20] (201);
			
			%c6-c6
			\draw[->,color=blue] (201) edge[bend left = 75] (012);
			\draw[->] (012) edge[bend left = 40] (120);
			
			% c6-c7
			\draw[->,color=blue] (012) edge (121);
			
			%c7-c7
			\draw[->,color=blue] (121) edge[bend left = 75] (211);
			\draw[->,color=blue] (211) edge[bend left =40] (112);
			
			%c7-c6
			\draw[->,color=blue] (112) edge (120);

			%                   fourth line
			
			%c8
			\node[below = 55pt of 201] (202) {$202$};
			\node[right = 7pt of 202] (022) {$022$};
			\node[below right = 5pt and -8.5pt of 202] (220) {$220$};
			
			%c9
			\node[right = 15pt of 022] (221) {$221$};
			\node[right = 7pt of 221] (212) {$212$};
			\node[below right = 5pt and -8.5pt of 221] (122) {$122$};
			
			%c10
			\node[below right = -3pt and 15pt of 212] (222) {$222$};
			
			%c6-c8
			\draw[->,color=blue] (120) edge[bend right =20] (202);
			
			%c8-c8
			\draw[->,color=blue] (202) edge[bend left = 75] (022);
			\draw[->] (022) edge[bend left = 40] (220);
			
			%c8-c9
			\draw[->,color=blue] (022) edge (221);
			
			%c9-c9
			\draw[->,color=blue] (221) edge[bend left = 75] (212);
			\draw[->,color=blue] (212) edge[bend left = 40] (122);
			
			%c9-c10
			\draw[->,color=blue] (122) edge[bend right = 10] (222);
			
			%c10-c8
			\draw[->,color=blue] (222) edge[bend left = 30] (220); 
			
			%c8-c4
			\draw[->,color=blue] (220) edge[bend left = 70] (200);
			
			%  grid
			
			%first line
			\draw [-,dotted] (0.6,1.2) edge (0.6,-1.1);
			\draw [-,dotted] (2.9,1.2) edge (2.9,-8.6) ;
			\draw [-,dotted] (5.3,1.2) edge (5.3,-1.1) ;
			\draw [-,dotted] (-0.8,-1.1) edge (7,-1.1);
			
			\node[above = 10pt of 000] (c0) {$C_0$};
			\node[above = 29.3pt of 100] (c1) {$C_1$};
			\node[above = 29.3pt of 110] (c2) {$C_2$};
			\node[above = 10pt of 111] (c3) {$C_3$};

			%second line
			
			\draw [-,dotted] (-0.8,-3.5) edge (7,-3.5);

			\node[above = 29.3pt of 020] (c4) {$C_4$};
			\node[above = 29.3pt of 102] (c5) {$C_5$};
			\node[above = 29.3pt of 122] (c9) {$C_9$};
			\node[right = 31pt of c9] (c10) {$C_{10}$};
			
			%third line

			\draw [-,dotted] (-0.8,-5.9) edge (7,-5.9);
			
			\node[above = 29.3pt of 120] (c6) {$C_6$};
			\node[above = 29.3pt of 112] (c7) {$C_7$};
			\node[above = 29.3pt of 220] (c8) {$C_8$};
			
			%fourth line
			
			\draw [-,dotted] (5.2,-5.9) edge (5.2,-8.5);

		\end{tikzpicture}
	}
		\caption{Our cycle joining construction. {\color{red}Red} arrows are $D(3,2)$ successors. {\color{blue}Blue} arrows extend $D(3,2)$ into $D(3,3)$. Black arrows are cycle-successors that are not $D(3,3)$ successors. }
		\label{fig:construction}
	\end{figure}
	%*****************************

	\begin{example}
		\label{exm:construction}
		For $n=k=3$, the key-words are: 
		\begin{equation*}
			\begin{split}	
				& \key_0=000, \key_1=001, \key_2=011, \key_3=111, \key_4=002, \key_5=102, \\ & \key_6=012,  \key_7=112, \key_8=022, \key_9=122, \key_{10}=222.
			\end{split}
		\end{equation*}
		
		To demonstrate the cycle joining construction, we show $D(3,3)$ below, along with cycle-parenthesis. Parenthesis of cycle $C_i$ are denoted $(_i, \dots, \actsymb[i]{)}{}$. We also provide a construction illustration in Figure~\ref{fig:construction}.
		\begin{equation*}
			\begin{split}
				D(3,3)= & (_0 {\color{red}000}, \actsymb[0]{)}{}
				(_1  {\color{red}001},{\color{red}010}, 
				(_2 {\color{red}101}, {\color{red}011}, 
				(_3 {\color{red}111}, \actsymb[3]{)}{}  {\color{red}110}, \actsymb[2]{)}{}  {\color{red}100}, \actsymb[1]{)}{} 
				(_4  {\color{blue}002}, (_5 {\color{blue}021} ,
				\\ &  {\color{blue}210}, {\color{blue}102}, \actsymb[5]{)}{} {\color{blue}020},
				(_6 {\color{blue}201}, {\color{blue}012}, 
				(_7  {\color{blue}121}, {\color{blue}211},  {\color{blue}112}, \actsymb[7]{)}{} {\color{blue}120}, \actsymb[6]{)}{} 
				(_8 {\color{blue}202},  {\color{blue}022}, 
				\\ &  
				(_9 {\color{blue}221}, {\color{blue}212},  {\color{blue}122}, \actsymb[9]{)}{}
				(_{10} {\color{blue}222},  \actsymb[10]{)}{}  {\color{blue}220}, \actsymb[8]{)}{}   {\color{blue}200} \actsymb[4]{)}{}    .
			\end{split}
		\end{equation*}
		
	\end{example}
	
	Observe that the following two properties hold for $n=k=3$.
	\begin{enumerate}
		
		\item $D(3,2)$ (colored in red) is a prefix of $D(3,3)$.
		
		\item $D(3,3)=\rpmx(3,3)$ (see Example~\ref{exmp:prefmax}).
		
	\end{enumerate}
	Later, We will prove that these are general features of our cycle joining construction. We show now that we indeed construct a De Bruijn sequence.

	\begin{theorem}
		\label{thm:D(n,k)-DB}
		$D(n,k)$ is a De Bruijn sequence.
	\end{theorem}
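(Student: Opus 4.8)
The plan is to verify the two De Bruijn conditions for the linear ordering $D(n,k)=D_{c(n,k)-1}$ by induction on the construction steps $D_0,D_1,\dots$. The key invariant I would maintain is: for every $m<c(n,k)$, the sequence $D_m$ is a cyclic De Bruijn-like ordering of $\bigcup_{i=0}^m C_i$, in the sense that in the cyclic closure of $D_m$ every word $\tau w$ is followed by a word $w\sigma$. The base case $D_0=(0^n)$ is immediate, since $0^n$ cyclically follows itself. For the inductive step I would analyze exactly what changes when we splice $C_{m+1}$ into $D_m$ immediately after $\sigma w0^l$, where $\key_{m+1}=0^l(\sigma{+}1)w$.

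The main point is a purely local computation at the two ``seams'' created by the insertion. Before insertion, in $D_m$ the word $\sigma w0^l$ is followed by some word $u$; the successor relation there tells us $u$ has the form $(w0^l)\tau$ for some $\tau$, and by the definition of the construction rule (the remark that the key-word rotation of $\sigma w0^l$ is $\colex$-smaller than $0^l(\sigma{+}1)w$) one checks $\tau$ is forced so that in fact $u = w0^l(\sigma{+}1)=\first(C_{m+1})$ is exactly the first word of the inserted cycle — wait, more carefully: I need to show that the word that used to follow $\sigma w0^l$ in $D_m$ is precisely the word that now follows $\last(C_{m+1})=(\sigma{+}1)w0^l$. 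So I would (i) show $\first(C_{m+1})=w0^l(\sigma{+}1)$ is a valid successor of $\sigma w0^l$ (they share the length-$(n{-}1)$ overlap $w0^l$), (ii) show $\last(C_{m+1})=(\sigma{+}1)w0^l$ has the same length-$(n{-}1)$ prefix-shift image, namely $w0^l$, as $\sigma w0^l$, so that whatever word $v$ followed $\sigma w0^l$ in $D_m$ is still a legal successor of $\last(C_{m+1})$, and (iii) recall that within $C_{m+1}$ the successor property already holds by the definition of the cycle (each $\sigma'w'$ followed by $w'\sigma'$, cyclically from $\first$ to $\last$). Items (i)–(iii) together show that splicing preserves the cyclic successor invariant, and that no previously-correct adjacency outside the seam is disturbed. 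Since all $n$-words are covered (the $C_i$ partition $[k]^n$), after the final step $D(n,k)$ is a cyclic ordering of all of $[k]^n$ satisfying condition~(1); condition~(2) is then just the statement that the invariant holds for the wrap-around of the final sequence, which is automatic since the invariant is about the cyclic closure.

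The step I expect to be the real obstacle is justifying that $\sigma w 0^l$ actually lies in $D_m$ and, more delicately, pinning down which word of $D_m$ it is followed by and confirming that this word and $\first(C_{m+1})$ are ``interchangeable'' as successors — i.e. the bookkeeping that the seam surgery is consistent. This is where the $\colex$-ordering of key-words does the work: because $\key_0<_\colex\cdots<_\colex\key_{m+1}$ and the rotation of $\sigma w0^l$ has a $\colex$-smaller key-word, that rotation's cycle $C_j$ with $j\le m$ has already been inserted, so $\sigma w0^l\in D_m$; and since $\first(C_{m+1})$ and the old successor $v$ of $\sigma w0^l$ both begin with $w0^l$ (being successors of $\sigma w0^l$ in the De Bruijn sense), the cut-and-paste is well defined. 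I would also double-check the degenerate cases: $m+1$ with $w=\varepsilon$ (so $\key_{m+1}=0^l(\sigma{+}1)$, a ``near-constant'' word) and the case where $\sigma=0$ versus $\sigma>0$, since the form $0^l(\sigma{+}1)w$ must be the canonical $\colex$-maximal-among-rotations form and one wants $\sigma w0^l$ to be a genuine word of the right shape. Once these edge cases are dispatched, the induction closes and $D(n,k)$ is a De Bruijn sequence.
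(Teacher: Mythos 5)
Your proof is correct and follows essentially the same route as the paper: an induction over the insertion steps with a purely local check at the two seams, using that $\first(C_{m{+}1})=w0^l(\sigma{+}1)$ and $\last(C_{m{+}1})=(\sigma{+}1)w0^l$ share the length-$(n{-}1)$ overlap $w0^l$ with the insertion point $\sigma w0^l$, plus the already-noted fact that $\sigma w0^l\in D_m$ because its key-word is $\colex$-smaller. The only cosmetic difference is that you maintain a single invariant on the cyclic closure, whereas the paper keeps two linear invariants (the successor property, and that the current last word has the form $\sigma 0^{n{-}1}$) and derives the wrap-around condition from the latter.
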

	
	\begin{proof}
		First, we show that the successor property is an invariant of the construction as it holds for each sequence $D_m$. It vacuously holds for $D_0=(0^n)$. In addition, by the construction rule, if we insert a cycle $C_{m}$ between $\tau w$ and $w\sigma$, or after the final word $\tau w$,   then $\first(C_m)$ is of the form $w\xi$, and thus $\last(C_m)=\xi w$.
		
		Second, since at each step we insert cycles after some word, the first word in $D(n,k)$ is $0^n$. We need to show that the last word is of the form $\sigma 0^{n{-}1}$. We prove that this is an invariant of the construction as well. The claim vacuously holds for $D_0=(0^n)$. For the induction step, if we append a cycle $C$ to $D_m$, since $D_m$ ends with some $\sigma 0^{n{-}1}$, by the construction rule, $\first(C)=0^{n{-}1}(\sigma{+}1)$ and thus $\last(C)=(\sigma{+}1)0^{n{-}1}$. Hence, the invariant is preserved.  
	\end{proof}
	
	We also mention the next corollary that will use us later.
	
	\begin{cor}\label{cor:Dnk-ends-in-k-1}
		$D(n,k)$ ends in $(k{-}1)0^n$.
	\end{cor}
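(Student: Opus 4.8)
The plan is to follow the last word of the intermediate sequences $D_0,D_1,\dots,D_{c(n,k){-}1}$ (the case $k=1$ is trivial, as then $D(n,1)=(0^n)$). Recall from the proof of Theorem~\ref{thm:D(n,k)-DB} that every $D_m$ ends in a word of the form $\sigma 0^{n{-}1}$, and note that the last word of $D_{m{+}1}$ differs from that of $D_m$ only when $C_{m{+}1}$ is \emph{appended at the very end}, i.e.\ inserted immediately after what was the last word of $D_m$, in which case the new last word is $\last(C_{m{+}1})$. The first ingredient is a characterization of this situation: writing $\key_{m{+}1}=0^l(\sigma{+}1)w$, the cycle $C_{m{+}1}$ is inserted right after $\sigma w 0^l$, so ``appended at the end'' means $\sigma w 0^l$ equals the last word $\sigma'0^{n{-}1}$ of $D_m$, which forces $\sigma=\sigma'$ and $w=0^{n{-}1{-}l}$. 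Then $\key_{m{+}1}=0^l(\sigma{+}1)0^{n{-}1{-}l}$ has a single nonzero symbol, and since the $\colex$-largest rotation of such a word is $0^{n{-}1}(\sigma{+}1)$ --- pushing the nonzero symbol as far left as possible strictly increases the word in $\colex$ order --- being a key-word forces $l=n{-}1$, i.e.\ $\key_{m{+}1}=0^{n{-}1}(\sigma{+}1)$. Conversely, for $\key_{m{+}1}=0^{n{-}1}\tau$ one computes $\first(C_{m{+}1})=0^{n{-}1}\tau$ and $\last(C_{m{+}1})=\tau 0^{n{-}1}$.

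The main step is an induction on $m$ with the invariant: the last word of $D_m$ is $\tau_m 0^{n{-}1}$, where $\tau_m$ is the largest $\tau\in\{0,\dots,k{-}1\}$ for which $0^{n{-}1}\tau$ occurs among $\key_0,\dots,\key_m$ (well-defined since $\key_0=0^n=0^{n{-}1}0$). The base case $m=0$ is immediate. For the inductive step, observe that the only key-word of the form $0^{n{-}1}\tau$ that can equal $\key_{m{+}1}$ is $0^{n{-}1}(\tau_m{+}1)$: all $\colex$-smaller words $0^{n{-}1}\tau$ with $\tau\leq\tau_m$ already occur among $\key_0,\dots,\key_m$ (in particular $0^{n{-}1}\tau_m$), so the next such key-word in $\colex$ order is $0^{n{-}1}(\tau_m{+}1)$. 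If $\key_{m{+}1}=0^{n{-}1}(\tau_m{+}1)$, then by the first ingredient its insertion point is exactly the current last word $\tau_m 0^{n{-}1}$, so $C_{m{+}1}$ is appended at the end with new last word $\last(C_{m{+}1})=(\tau_m{+}1)0^{n{-}1}$ and $\tau_{m{+}1}=\tau_m{+}1$. Otherwise $\key_{m{+}1}$ is not of the form $0^{n{-}1}\tau$, so by the first ingredient $C_{m{+}1}$ is inserted in the middle, the last word is unchanged, and $\tau_{m{+}1}=\tau_m$; either way the invariant persists.

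Finally, the $\colex$-largest key-word of the form $0^{n{-}1}\tau$ over $[k]$ is $0^{n{-}1}(k{-}1)$, so $\tau_{c(n,k){-}1}=k{-}1$, whence $D(n,k)=D_{c(n,k){-}1}$ ends in $(k{-}1)0^{n{-}1}$. The only genuinely delicate points are the $\colex$-maximality computation for words with a single nonzero symbol, and the bookkeeping caused by the key-words $0^{n{-}1}\tau$ not being consecutive in the global enumeration; both are routine once the ``middle insertions leave the last word untouched'' observation is isolated, and I foresee no real obstacle.
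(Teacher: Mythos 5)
Your proof is correct, and it reaches the same conclusion as the paper (note the statement itself has a typo: both your argument and the paper's establish that the last word is $(k{-}1)0^{n-1}$, not $(k{-}1)0^{n}$). However, your route is noticeably more elaborate than the paper's. The paper argues by elimination in two lines: Theorem~\ref{thm:D(n,k)-DB} already gives that the last word has the form $\sigma 0^{n-1}$, and for each $\tau<k{-}1$ the construction rule inserts the cycle with key $0^{n-1}(\tau{+}1)$ immediately after $\tau 0^{n-1}$, so no $\tau 0^{n-1}$ with $\tau<k{-}1$ can be last; hence the last word is $(k{-}1)0^{n-1}$. You instead run a forward induction that tracks the last word of every intermediate $D_m$, which requires the extra (correct) characterization that a cycle is appended at the very end if and only if its key is $0^{n-1}(\sigma{+}1)$, plus the bookkeeping with $\tau_m$. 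What your version buys is a sharper invariant --- you identify exactly which insertions move the end of the sequence and show the trailing symbol marches up $0,1,\dots,k{-}1$ --- at the cost of several steps the paper's elimination argument avoids. One small slip in your write-up: in justifying that $0^l(\sigma{+}1)0^{n-1-l}$ is a key-word only for $l=n{-}1$, you say the word grows in $\colex$ order as the nonzero symbol is pushed \emph{left}; since $\colex$ compares from the right, it should be pushed as far \emph{right} as possible, which is exactly what $0^{n-1}(\sigma{+}1)$ achieves, so your conclusion stands.
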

	\begin{proof}
		We showed that $D(n,k)$ ends in a word $\sigma 0^{n{-}1}$. By the construction rule, for each $\tau<k{-}1$, the cycle of $(\tau{+}1)0^{n{-}1}$ is inserted after $\tau0^{n{-}1}$. Therefore,  $(k{-}1)0^{n{-}1}$ is the final word of $D(n,k)$.
	\end{proof}
	
	%Observe that every $n$-word appears exactly once in $D$, and each word $w\sigma\neq 0^n$ is preceded by a word of the form $\tau w$. {\color{blue} This requires explanation/proof why when we push elements in the middle, this property holds. Also, perhaps we can be more specific about $\tau$? $\tau \in \{\sigma, \sigma-1\}$} 
	%{\color{red} The latter is an invariant of the construction; if a cycle $C_{m}$ is inserted after the last word $\tau w$, or between $\tau w$ and $w\sigma$, then $\first(C_m)$ is of the form $w\xi$, and thus $\last(C_m)=\xi w$. Therefore, the next theorem follows.
	%}

	From this point on, for  $n$-words $w,w'\in[k]^n$ we write $w< w'$ if $w$ appears before $w'$ in $D(n,k)$. Note that for each $k>0$, the linear ordering $D(n,k{+}1)$ extends $D(n,k)$. Hence, $<$ is well defined as it is not dependent on $k$. We further remark that the notation $<$ is reasonable since for the case $n=1$, $<$ coincides with the natural ordering of $[k]$

	\subsection{The Infinite Nature of the Construction}
	\label{sec:infinite}
	
	We turn to prove that our cycle joining construction  effectively provides us with an infinite De Bruijn sequence. Let $D=\bigcup_{k=0}^\infty D(n,k)$. That is, $D$ is a relation $D=(\mathbb N^n ,<)$, where $w_1< w_2$ if $w_1<w_2$ in some $D(n,k)$.

	\begin{theorem}
		\label{thm:D-infinite}
		$D$ is an infinite De Bruijn sequence.
	\end{theorem}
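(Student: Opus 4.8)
The plan is to verify that the relation $D=(\mathbb{N}^n,<)$ satisfies the two requirements of an infinite De Bruijn sequence: it is a total ordering of $\mathbb{N}^n$, and the successor property (condition (1)) holds, i.e. every word $\tau w$ that has a successor is followed by a word of the form $w\sigma$. The key structural fact, already noted in the excerpt, is that for every $k>0$ the ordering $D(n,k+1)$ extends $D(n,k)$: each $D(n,k)$ sits as an initial segment of $D(n,k+1)$. I would first establish this extension property carefully, since everything hinges on it; it follows from the Construction Rule because the key-words over $[k]$ are an initial segment of the key-words over $[k+1]$ in $\colex$ order (any key-word using the symbol $k$ is $\colex$-larger than any key-word over $[k]$), so the inductive insertion process for alphabet $k+1$ performs exactly the same steps as for alphabet $k$ before it ever touches a cycle containing the symbol $k$. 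Consequently $D(n,k)=D_{c(n,k)-1}$ is a prefix of $D(n,k+1)$.

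Granting the extension property, totality of $<$ on $\mathbb{N}^n$ is routine: any two words $w_1,w_2\in\mathbb{N}^n$ use only finitely many symbols, hence both lie in $[k]^n$ for some $k$, where $D(n,k)$ is a total order and thus compares them; and the comparison is independent of the choice of $k$ by the extension property, so $<$ is a well-defined total (in fact linear) order. Irreflexivity, antisymmetry and transitivity all transfer from the finite orders $D(n,k)$ in the same way.

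For the successor property, take any $\tau w\in\mathbb{N}^n$ that is not the last element of $D$ (note $D$, unlike each $D(n,k)$, need not have a last element — indeed it will not — but the argument works regardless). Pick $k$ large enough that $\tau w\in[k]^n$ and $\tau w$ is \emph{not} the last word of $D(n,k)$; this is possible because, by Corollary~\ref{cor:Dnk-ends-in-k-1}, the last word of $D(n,k)$ is $(k-1)0^{n-1}$ (reading $(k{-}1)0^n$ as the last \emph{letter} overflow is not needed — the last $n$-word is $(k-1)0^{n-1}$), and for any fixed $\tau w$ we can choose $k$ with $k-1\neq\tau$ so that $\tau w$ is interior to $D(n,k)$. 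Then the successor of $\tau w$ in $D(n,k)$ exists and, by Theorem~\ref{thm:D(n,k)-DB}, has the form $w\sigma$. Finally, by the extension property this successor is unchanged in every $D(n,k')$ with $k'\ge k$, hence it is the successor of $\tau w$ in $D$. So condition (1) holds for $D$.

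The main obstacle is the extension property $D(n,k)\sqsubseteq D(n,k+1)$: one must argue that the Construction Rule's sequence of insertions for the larger alphabet genuinely refines that for the smaller one. The crucial point is the ordering of key-words — that $\key_0<_\colex\key_1<_\colex\cdots$ lists all key-words over $[k]$ before any key-word that contains the letter $k$, because a word containing $k$ has $\colex$-rank above every word over $[k]$ — together with the fact that when $\key_{m+1}=0^l(\sigma+1)w$ with $\sigma+1\le k-1$, the anchor word $\sigma w 0^l$ after which $C_{m+1}$ is inserted already lies in $D_m$ and is itself a word over $[k]$. Thus the first $c(n,k)-1$ insertion steps are identical in both constructions and touch only words in $[k]^n$, giving $D(n,k)=D_{c(n,k)-1}$ as a prefix of $D(n,k+1)$; an induction on $k$ (or directly on the step index $m$) completes this. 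Once this is in hand, the rest of the proof is bookkeeping.
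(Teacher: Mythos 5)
Your overall architecture matches the paper's: reduce everything to the claim that $D(n,k)$ is a prefix (initial segment) of $D(n,k{+}1)$, and then derive totality of $<$ and the successor condition from that; this part of your write-up is sound, and your care in choosing $k$ large enough that $\tau w$ is not the last word of $D(n,k)$ is exactly right. The gap is in your proof of the prefix property itself, which is the entire content of the paper's argument. From the fact that the key-words over $[k]$ form an initial segment of the key-words over $[k{+}1]$ in $\colex$ order, you correctly conclude that the first $c(n,k)$ insertion steps of the two constructions coincide, so that $D(n,k)=D_{c(n,k){-}1}$ arises as an intermediate stage of the $[k{+}1]$ construction. But this only shows that $D(n,k)$ is an order-preserving \emph{subsequence} of $D(n,k{+}1)$: the remaining cycles $C_{c(n,k)},C_{c(n,k){+}1},\dots$ are \emph{inserted into} the existing sequence, and nothing in your argument rules out such an insertion landing strictly inside $D(n,k)$. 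That would destroy the prefix property, and with it the successor condition for the limit order: a word's immediate successor could change at every stage $k$, or fail to exist in $D$ altogether.

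The paper closes exactly this gap, and you must too. By Corollary~\ref{cor:Dnk-ends-in-k-1}, $D(n,k)$ ends in $(k{-}1)0^{n{-}1}$; the first new key-word is $\key_{c(n,k)}=0^{n{-}1}k$, so by the construction rule $C_{c(n,k)}$ is inserted immediately after $(k{-}1)0^{n{-}1}$, i.e.\ appended at the very end; and the progression property (Proposition~\ref{prop:first-in-order}, whose proof occupies most of the subsection) then forces every cycle $C_m$ with $m>c(n,k)$ to be inserted after $\first(C_{c(n,k)})$, hence after all of $D(n,k)$. Without invoking or reproving that progression property, ``is a prefix'' is an assertion, not a consequence. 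A minor further point: your justification that ``a word containing $k$ has $\colex$-rank above every word over $[k]$'' is false for arbitrary words (e.g.\ $20<_\colex 11$); it holds for key-words only because a key-word, being $\colex$-maximal among its rotations, ends in its maximal symbol, and that is the fact you should state.
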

	
	Clearly, to prove Theorem~\ref{thm:D-infinite}, it is sufficient to show that each $D(n,k)$ is a prefix of $D(n,k{+}1)$.
	$D(n,k{+}1)$ is obtained by inserting the cycles $C_0,\dots,C_{c(n,k{+}1){-}1}$ one by one, into the sequence constructed so far. Towards proving our claim, we prove some progression property of the construction: we show that if $m<r$, then the cycle $C_r$ is not inserted before the cycle $C_m$. Later, this will use us to prove that $D(n,k)$ is a prefix of $D(n,k{+}1)$ as follows. We show that after we insert the cycles $C_0,\dots,C_{c(n,k){-}1}$ (and thus construct $D(n,k)$), the cycle $C_{c(n,k)}$ is appended to $D(n,k)$. As a result, by the progression property, we get the required. The next proposition formalizes the discussed progression property.
	
	\begin{proposition}
		\label{prop:first-in-order}
		If $m<r$, then $\first(C_m)<\first(C_r)$. 
	\end{proposition}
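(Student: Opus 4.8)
The plan is to prove the proposition by induction on the number of construction steps, tracking the relative position of $\first(C_m)$ and $\first(C_r)$ for $m<r$ as the sequences $D_0, D_1, \dots$ are built. The key point is that a cycle $C_j$ only enters the picture at step $j$ (when $D_j$ is formed from $D_{j-1}$), so for $m < r$ the cycle $C_m$ is already present in $D_{r-1}$ before $C_r$ is inserted. Thus it suffices to show two things: first, that at the moment $C_r$ is inserted it lands \emph{after} $\first(C_m)$; and second, that no later insertion (of some $C_t$ with $t>r$) ever moves $C_r$ in front of $\first(C_m)$ — but since every insertion only splices a new block into an existing gap and never reorders existing words, the relative order of $\first(C_m)$ and $\first(C_r)$ is frozen once both are present. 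So the whole content is in the first claim.

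For the first claim, write $\key_r = 0^l(\sigma{+}1)w$, so that by Definition~\ref{def:construction-rule} the block $C_r$ is inserted immediately after the word $u := \sigma w 0^l \in D_{r-1}$, and $\first(C_r) = w 0^l(\sigma{+}1)$ appears right after $u$. I need $\first(C_m)$ to occur no later than $u$ in $D_{r-1}$, equivalently in the final sequence $D(n,k)$. The natural route is to compare the \emph{key-words} of the relevant cycles in $\colex$ order and then lift that to the order in $D(n,k)$. Let $\kappa$ be the key-word that is the rotation of $u = \sigma w 0^l$; the parenthetical remark in Definition~\ref{def:construction-rule} already records that $\kappa <_\rlex \key_r = 0^l(\sigma{+}1)w$, and $u$ lies in the cycle $C_{m'}$ with $\key_{m'}=\kappa$, so $m' < r$. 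The key structural fact I would isolate and prove is a monotonicity lemma: \emph{if $m' < r$ then every word of $C_{m'}$ precedes $\first(C_r)$ in $D(n,k)$, and in particular $\first(C_{m'}) \le u < \first(C_r)$}; combined with the induction hypothesis applied to the pair $(m, m')$ (if $m \le m'$) or directness (if $m = m'$), this gives $\first(C_m) < \first(C_r)$. Actually the cleanest formulation is to prove directly by induction on $r$ the statement "$\first(C_0) < \first(C_1) < \cdots < \first(C_{r})$ in $D_r$," using that $C_r$ is inserted after $\sigma w 0^l$, which lies in some earlier cycle $C_{m'}$, hence (by the inductive ordering of the $\first$'s and the fact that $\sigma w 0^l$ comes no earlier than $\first(C_{m'})$ within its own block's span) after $\first(C_{m'})$ and therefore after all $\first(C_j)$ with $j \le m' < r$.

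The step I expect to be the main obstacle is pinning down precisely why the insertion point $u = \sigma w 0^l$ of $C_r$ sits at or after $\first(C_{m'})$ where $C_{m'}$ is $u$'s own cycle — i.e. that inserting $C_r$ does not splice it in \emph{before} the first element of the very cycle into whose territory it is being inserted, and more delicately that it doesn't land before $\first(C_j)$ for some intermediate $j$ with $m' < j < r$ whose block happens to have been nested inside $C_{m'}$'s span. Handling this requires a clear invariant about how blocks nest: once $C_{m'}$ is placed, all subsequently inserted cycles that fall "inside" it are spliced strictly between $\first(C_{m'})$ and the end of $C_{m'}$'s occurrence, so $\first(C_{m'})$ remains the leftmost word of that whole nested cluster. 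I would state this as an auxiliary invariant ("in every $D_m$, for each $i \le m$ the word $\first(C_i)$ precedes every word of $C_i$ and precedes $\first(C_j)$ for all $i<j\le m$") and carry it through the induction alongside the proposition itself; the successor-property reasoning from the proof of Theorem~\ref{thm:D(n,k)-DB} plus the explicit form of the insertion point in Definition~\ref{def:construction-rule} should then close it without heavy computation.
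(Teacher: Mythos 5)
Your overall architecture matches the paper's: reduce everything to showing that the insertion point $u=\sigma w0^l$ of $C_r$ (where $\key_r=0^l(\sigma{+}1)w$) is preceded by $\first(C_j)$ for every $j<r$, note that relative order is frozen once two words are both present, and observe that the only delicate case is an intermediate cycle $C_j$ with $m'<j<r$, where $C_{m'}$ is the cycle containing $u$. You correctly flag this as the main obstacle, but the auxiliary invariant you propose does not close it. Your invariant (``$\first(C_i)$ precedes every word of $C_i$ and precedes $\first(C_j)$ for $i<j$'') only locates $\first(C_j)$ to the \emph{right} of $\first(C_{m'})$; it says nothing about where $\first(C_j)$ sits relative to the specific word $u$ inside $C_{m'}$'s span. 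If some $C_j$ with $j<r$ had been spliced in \emph{after} $u$ within that span, inserting $C_r$ immediately after $u$ would place $\first(C_r)$ before $\first(C_j)$ and the proposition would fail; nothing in your invariant or in the successor-property bookkeeping rules this out. So the claim that the explicit insertion point plus Theorem~\ref{thm:D(n,k)-DB}-style reasoning ``should close it without heavy computation'' is exactly where the gap is.

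What is actually needed is a statement about \emph{which} words those intermediate cycles consist of and where they land relative to $u$, and this uses the arithmetic of key-words, not just the splicing mechanics. The paper first observes that every key $\colex$-strictly between $0^l\sigma w$ and $0^l(\sigma{+}1)w$ must have the form $z\sigma w$ with $z\neq 0^l$, and then proves (Lemma~\ref{lemma:before-keyword}) that \emph{all} elements of such a cycle precede $\sigma w0^l$, by $\lex$-induction on the prefix $z$: writing $z=0^t(\tau{+}1)y$, the cycle of $z\sigma w$ is inserted after $\tau y\sigma w0^t$, whose key $0^t\tau y\sigma w$ is again a key-word ending in $\sigma w$ with a $\lex$-smaller prefix, until one bottoms out in the tail of $C_{m'}$ itself, where Lemma~\ref{lemma:finish-with-zeros} (leading zeros of the key shift right one at a time along the end of the cycle) finishes the argument. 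Without some substitute for this induction on the structure of the intermediate keys, your proof is incomplete at its central step.
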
	
	
	The remainder of this subsection is devoted for proving Proposition~\ref{prop:first-in-order}. We start with a few technical lemmas.

	\begin{lemma}
		\label{lem:decrease-by-one }
		If $0^l(\sigma{+}1)w$ is a key-word, then  $0^l\sigma w$ is a key-word.
	\end{lemma}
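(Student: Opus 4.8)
Set $p:=|w|$, so that $n=l+1+p$, and recall that a key\nobreakdash-word is by definition an $n$\nobreakdash-word that is $\colex$\nobreakdash-maximal among its rotations. For a length\nobreakdash-$n$ word $z$ and $0\le i<n$, write $\mathrm{rot}_i(z)=z_{i+1}\cdots z_n z_1\cdots z_i$, so $\mathrm{rot}_0(z)=z$ and the rotations of $z$ are exactly $\mathrm{rot}_0(z),\dots,\mathrm{rot}_{n-1}(z)$. Put $\key:=0^l(\sigma{+}1)w$; by hypothesis $\mathrm{rot}_i(\key)\le_\colex\key$ for every $i$. If $w=\varepsilon$ then $v:=0^l\sigma$ is already a key\nobreakdash-word: every rotation of it other than itself ends in $0$ and is thus $\colex$\nobreakdash-smaller (and if $\sigma=0$ there is only one rotation). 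So assume $p\ge 1$; it remains to show $\mathrm{rot}_i(v)\le_\colex v$ for each $i\ne 0$.

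The plan is to transport the inequality $\mathrm{rot}_i(\key)\le_\colex\key$ to $v$. Since $v$ and $\key$ agree except in position $l+1$, where $v$ has $\sigma$ and $\key$ has $\sigma{+}1\ge 1$, the rotations $\mathrm{rot}_i(v)$ and $\mathrm{rot}_i(\key)$ also agree except in one position, the image of position $l+1$, which carries $\sigma$ in $\mathrm{rot}_i(v)$ and $\sigma{+}1$ in $\mathrm{rot}_i(\key)$; let $\rho$ be its distance from the right\nobreakdash-hand end. Because $\colex$ compares words from the right, I will follow the comparison of $\mathrm{rot}_i(\key)$ with $\key$ position by position from the right, keeping track of the two places where the ``$v$\nobreakdash-world'' and the ``$\key$\nobreakdash-world'' diverge: position $\rho$ from the right inside the rotation, and position $p+1$ from the right inside the un\nobreakdash-rotated word (where $v$ and $\key$ differ). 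For $i\ne 0$, position $l+1$ is not fixed by $\mathrm{rot}_i$, so $\rho\ne p+1$, and the analysis splits into $\rho\le p$ and $\rho\ge p+2$.

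Suppose $\rho\le p$, and let $m$ be the first position from the right at which $\mathrm{rot}_i(\key)$ and $\key$ differ (put $m=\infty$ if none). If $m<\rho$, then all positions $\le m$ differ from both $\rho$ and $p+1$, so $\mathrm{rot}_i(v)$ and $v$ agree up to $m-1$ and at $m$ we get $\mathrm{rot}_i(v)[m]=\mathrm{rot}_i(\key)[m]<\key[m]=v[m]$, hence $\mathrm{rot}_i(v)<_\colex v$. If $m\ge\rho$, then $\mathrm{rot}_i(\key)$ and $\key$ agree up to position $\rho-1$, hence so do $\mathrm{rot}_i(v)$ and $v$; and at position $\rho$ we have $\mathrm{rot}_i(v)[\rho]=\sigma$ while $v[\rho]=\key[\rho]$ equals $\sigma{+}1$ (if $m>\rho$) or exceeds $\sigma{+}1$ (if $m=\rho$, since then $\mathrm{rot}_i(\key)[\rho]=\sigma{+}1<\key[\rho]$); in either case $\mathrm{rot}_i(v)[\rho]<v[\rho]$, so again $\mathrm{rot}_i(v)<_\colex v$.

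The case $\rho\ge p+2$, in which the rotation has pushed the $(\sigma{+}1)$\nobreakdash-symbol strictly to the left of the spot where $v$ and $\key$ disagree, is the one I expect to be the main obstacle; it is resolved by invoking $\colex$\nobreakdash-maximality of $\key$ a second time. I claim $\mathrm{rot}_i(\key)$ and $\key$ must already differ within their last $p$ symbols. Otherwise $\mathrm{rot}_i(\key)$ ends in $w$; matching last symbols (indices mod $n$) gives $\key_i=\key_n$, and since $\key$ is a key\nobreakdash-word its last symbol $\key_n$ is the largest symbol occurring in $\key$ — otherwise a rotation ending in a larger symbol would be $\colex$\nobreakdash-larger than $\key$ — so $\key_i=\key_n\ge\sigma{+}1\ge 1$. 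Thus position $i$ is nonzero, forcing $i\ge l+1$ (positions $1,\dots,l$ of $\key$ are $0$); but then the image of position $l+1$ under $\mathrm{rot}_i$ is position $(l+1-i)\bmod n\in\{l+2,\dots,n\}$, at distance $\le p$ from the right, contradicting $\rho\ge p+2$. So let $m\le p$ be the first position from the right where $\mathrm{rot}_i(\key)$ and $\key$ differ; since $m\le p<p+1$ and $m<\rho$, positions $\le m$ behave identically in the $v$\nobreakdash-world, whence $\mathrm{rot}_i(v)$ and $v$ agree up to $m-1$ and $\mathrm{rot}_i(v)[m]=\mathrm{rot}_i(\key)[m]<\key[m]=v[m]$, so $\mathrm{rot}_i(v)<_\colex v$. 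In all cases every rotation of $v$ is $\le_\colex v$, so $v=0^l\sigma w$ is a key\nobreakdash-word.
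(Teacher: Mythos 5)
Your proof is correct. I checked the three cases ($w=\varepsilon$; $\rho\le p$ with its two sub-cases on $m$; and $\rho\ge p+2$) and the bookkeeping of the two discrepancy positions ($\rho$ in the rotated pair, $p{+}1$ in the un-rotated pair) is sound throughout; the key auxiliary fact you use in the hard case --- that the last symbol of a key-word is the maximal symbol occurring in it, which forces $i\ge l{+}1$ and hence $\rho\le p$ --- is exactly the right lever. Your route is, however, genuinely different from the paper's, which gives no argument at all: it simply asserts that the lemma is ``essentially identical'' to Lemma~7 of a prior work treating the mirror situation ($\lex$-ordering and minimal rotations). What your version buys is a self-contained, fully checkable proof inside the present framework, at the cost of about a page of careful positional case analysis; what the paper's citation buys is brevity, at the cost of leaving the reader to translate the statement and proof of an external lemma through the reverse-and-complement dictionary. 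If anything, your argument could be compressed slightly by observing at the outset that the two sub-cases $m<\rho$ and $m\le p<\rho$ are handled by the identical two sentences, but this is cosmetic.
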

	\begin{proof}
		This lemma is essentially identical to~\cite[Lemma~7]{AmramARSSW19}, which deals with the analogous case for $\lex$-ordering, and minimal rotation.
	\end{proof}

	\begin{lemma}
		\label{lemma:finish-with-zeros}
		Let $0^l\sigma w\neq 0^n$ be a key-word, and $0< p\leq l$. Then, $0^p\sigma w0^{l{-}p}<\sigma w 0^l$.
	\end{lemma}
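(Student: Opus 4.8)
The plan is to place both $0^p\sigma w0^{l-p}$ and $\sigma w0^l$ inside one and the same cycle of the construction, and then to exploit the fact that the elements of any cycle $C_m$ keep their cyclic order when embedded into $D(n,k)$.

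First I would note that $0^l\sigma w$ is itself a key-word, say $\key_m$, and that both words of interest are rotations of it: shifting $\key_m=0^l\sigma w$ left by $l-p$ positions (moving $l-p$ of its leading zeros to the end) produces $0^p\sigma w0^{l-p}$, which is legitimate since $0\le l-p<l$ and $\key_m$ starts with at least $l$ zeros, while shifting $\key_m$ left by $l$ positions produces $\sigma w0^l$. Hence both words lie in the cycle $C_m$.

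Next I would use that in $C_m$ the successor operation is ``rotate left by one''. Writing $a$ for the length of the leading run of zeros of $\key_m$ — so $a\ge l$ because $\key_m=0^l\sigma w$ — the word $\first(C_m)$ is the left-shift of $\key_m$ by $a+1$ and $\last(C_m)$ is the left-shift by $a$; consequently, reading $C_m$ from $\first(C_m)$ to $\last(C_m)$ visits the left-shifts of $\key_m$ by $a+1,a+2,\dots,n-1,0,1,\dots,a$ in that order. Since $0\le l-p<l\le a$, both left-shift amounts $l-p$ and $l$ fall in the final increasing block $0,1,\dots,a$, and $l-p<l$, so $0^p\sigma w0^{l-p}$ is read strictly before $\sigma w0^l$. (If $\key_m$ is periodic with period $d$, the hypothesis $0^l\sigma w\ne 0^n$ forces $d>a\ge l$, so $l-p$ and $l$ remain distinct modulo $d$ and the same conclusion holds.)

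Finally, each cycle is inserted into the sequence built so far as a single contiguous block in its reading order, and subsequent insertions never reorder already-placed words; therefore the words of $C_m$ occur in $D(n,k)$ — and hence in the $k$-independent order $<$ — in exactly the order in which $C_m$ is read from $\first(C_m)$ to $\last(C_m)$. Combining the two observations yields $0^p\sigma w0^{l-p}<\sigma w0^l$. The only genuinely delicate point is the bookkeeping that identifies each of the two words as a specific rotation of $\key_m$ lying in the ``sorted tail'' of the cycle's reading order, together with the treatment of periodic key-words noted above; everything else is immediate from the construction rule.
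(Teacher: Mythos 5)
Your proof is correct and takes essentially the same route as the paper's: both words are rotations lying in the cycle of the key-word $0^l\sigma w$, the cycle's internal reading order reaches the left-shift by $l-p$ before the left-shift by $l$ (both sitting in the tail from $\key_m$ to $\last(C_m)$), and cycle joining preserves that internal order in $D$. The paper states this more tersely (``the trailing zeros on the left are shifted to the right one by one''), whereas you spell out the shift bookkeeping and the periodic case explicitly.
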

	\begin{proof}
		Write $0^l\sigma w=\key_m$ and consider the suffix of $C_m$, $(\key_m=0^l\sigma w,\dots, \last(C_m))$. In this subsequence, the trailing zeros on the left are shifted to the right one by one. Hence, $0^p\sigma w0^{l{-}p}$ appears before $\sigma w0^l$ in that sequence and thus $0^p\sigma w0^{l{-}p}<\sigma w 0^l$.
		%If $\sigma>0$, the claim clearly holds, as $\sigma w0^l$ is the last element in its cycle. Otherwise, $\sigma =0$ and thus $\sigma w 0^l = 0^r(\tau+1)x0^l$. Since $(\tau+1)x 0^{l+r}$ is the last element in the cycle, we have: \[0^p\sigma w0^{l{-}p} =0^{p{+}r}(\tau+1)x 0^{l{-}p} < \cdots <0^r(\tau{+}1)x 0^l=\sigma w0^l.\qedhere\]
	\end{proof}

	\begin{lemma}
		\label{lemma:before-keyword}

				If $\key_m=0^l\sigma w$, and $\key_r=z\sigma w$ where $z\neq 0^{n{-}|\sigma w|}$, then all elements of $C_r$ precede $\sigma w0^l$ in $D$.

	\end{lemma}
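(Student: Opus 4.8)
The plan is to prove this by strong induction on $r$, following the construction order. Fix $m$ with $\key_m = 0^l\sigma w$, and consider the word $v := \sigma w 0^l \in D$ (which lies in cycle $C_m$, being a rotation of $\key_m$ that appears in $C_m$ at or after $\key_m$). I want to show: every word of $C_r$ precedes $v$, whenever $\key_r = z\sigma w$ with $z \neq 0^{n-|\sigma w|}$. Write $z = 0^p(\tau+1)u$ for some $p \ge 0$, symbol $\tau$, and (possibly empty) word $u$, so that $\key_r = 0^p(\tau+1)u\sigma w$. The construction rule inserts $C_r$ immediately after the word $t := \tau u\sigma w 0^p \in D_{r-1}$. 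So it suffices to show two things: (1) the insertion point $t$ itself precedes $v$ in $D$, and (2) once $C_r$ is inserted right after $t$, the whole of $C_r$ still sits before $v$ — i.e. no later cycle insertion moves anything of $C_r$ past $v$, and also $v$ was not already "inside" the span that $C_r$ occupies. Actually the cleanest formulation: $C_r$ is inserted contiguously right after $t$; if $t < v$ in $D_{r-1}$ and moreover $v$ is not equal to $t$ and does not get separated, then all of $C_r < v$ in $D_r$, and later insertions only add words, never reorder, so all of $C_r < v$ in $D$.

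So the heart of the matter is step (1): showing $\tau u \sigma w 0^p < \sigma w 0^l$ in $D$. Here I would split on the key-word $\key_{r'}$ that is the rotation of $t = \tau u \sigma w 0^p$. Note $t$ ends in $\sigma w 0^p$ — it has the shape "something, then $\sigma w$, then $p$ zeros" — but its rotation-key need not have that shape. The natural move is to rotate $t$ to bring its leading zeros (the block $0^p$ at the end, plus possibly the initial zeros of $\tau u$ if $\tau = 0$ and $u$ starts with zeros, etc.) into canonical position. I expect to land in one of two situations. If $\tau u$ is itself all zeros (so $t = 0^a \sigma w 0^b$ for suitable $a,b$ with $a + b = l$, $b = p > 0$), then $t$ and $v = \sigma w 0^l$ are both rotations of $\key_m = 0^l \sigma w$, and Lemma~\ref{lemma:finish-with-zeros} directly gives $t = 0^p \sigma w 0^{l-p} < \sigma w 0^l = v$, provided $p \le l$, which holds. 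Otherwise $\tau u$ contains a nonzero symbol, the rotation-key $\key_{r'}$ of $t$ is co-lex strictly smaller than $\key_m$ (this needs an argument: the key of $t$ ends in $\sigma w$ preceded by something, and I'd compare it co-lex against $0^l \sigma w$ — the trailing zeros of $\key_m$ make it co-lex larger), hence $r' < m$, and then I can invoke the induction hypothesis (with the roles: $\key_{r'}$ plays the "$\key_r$" and $v$'s index $m$ plays the "$\sigma w 0^l$" target... ) — more precisely I'd apply Lemma~\ref{lemma:before-keyword} itself inductively, or Proposition~\ref{prop:first-in-order}-flavoured reasoning, to conclude that $t$, being in a cycle with smaller key, already precedes $v$.

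The main obstacle I anticipate is the bookkeeping in the recursive/inductive step: correctly identifying the key-word of the insertion-point word $t = \tau u \sigma w 0^p$ and proving its co-lex rank is strictly less than $m$, so that the induction hypothesis applies. This requires care because $t$'s key is obtained by an unknown rotation, and I must show this key still "ends with $\sigma w$ up to trailing-zero shifts" or otherwise compare it cleanly to $\key_m = 0^l\sigma w$ in co-lex order — using that appending/having more trailing zeros pushes a word up in co-lex order, together with Lemma~\ref{lem:decrease-by-one } to handle the decrement from $(\tau+1)$ down to $\tau$. A secondary but routine obstacle is making rigorous the claim "later cycle insertions never move an already-placed word past $v$", which follows immediately from the fact that Definition~\ref{def:construction-rule} only ever inserts blocks and never deletes or transposes, so the relative order of any two already-present words is permanent. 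Once these pieces are in place, combining them gives: all of $C_r$ is inserted directly after $t$, $t < v$, hence all of $C_r < v$ in $D$, as required.
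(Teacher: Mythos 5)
Your overall skeleton matches the paper's: identify the insertion point $t=\tau u\sigma w 0^p$ of $C_r$, reduce the claim to showing $t<\sigma w0^l$, handle the case where $\tau u$ is all zeros via Lemma~\ref{lemma:finish-with-zeros}, and recurse otherwise. But the recursive step as you describe it contains a genuine error. You claim that when $\tau u$ has a nonzero symbol, the key of $t$'s cycle is co-lex \emph{smaller} than $\key_m=0^l\sigma w$, so its index $r'$ satisfies $r'<m$, and that ``$t$, being in a cycle with smaller key, already precedes $v$.'' Both halves fail. First, by Lemma~\ref{lem:decrease-by-one } the key of $t$'s cycle is exactly $0^p\tau u\sigma w$ (there is no ``unknown rotation'' to worry about), and this word is co-lex \emph{larger} than $0^l\sigma w$: both end in $\sigma w$, and the rightmost position where $0^p\tau u$ differs from $0^l$ carries a nonzero symbol in $0^p\tau u$. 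So $m<r'<r$, not $r'<m$. Second, even if the comparison went your way, ``smaller key implies earlier in $D$'' is not a sound principle: the parenthesis structure allows a cycle with a smaller key to contain elements \emph{after} elements of a larger-keyed cycle --- indeed the very target word $\sigma w0^l=\last(C_m)$ comes after all of $C_r$ even though $\key_m<_\colex\key_r$. Appealing to ``Proposition~\ref{prop:first-in-order}-flavoured reasoning'' would also be circular, since that proposition is proved from this lemma.

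The repair is what the paper does: the induction hypothesis is the lemma itself, for the \emph{same} $m$ and the same target word $\sigma w0^l$, applied to the cycle $C_{r'}$ with $\key_{r'}=0^p\tau u\sigma w=z'\sigma w$, where $z'=0^p\tau u<_\lex z=0^p(\tau{+}1)u$ (equivalently, $r'<r$, so strong induction on $r$ with $m$ fixed also works). Since $z'\neq 0^l$ in this branch, the induction hypothesis yields that all of $C_{r'}$ --- in particular $t$, which is a rotation of $\key_{r'}$ --- precedes $\sigma w0^l$; the induction bottoms out exactly in your all-zeros case. With that substitution your argument goes through; the remaining points (the contiguous insertion of $C_r$ immediately after $t$, and the permanence of relative order under later insertions) are correct as you state them.
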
 
	\begin{proof}
		We prove by $\lex$-induction on $z$. That is, we assume that the lemma holds for each $z'<_\lex z$, and we prove for $z$.
		
		Note that $m<r$. We consider the sequence $D_{r}$, obtained by inserting $C_{r}$ into $D_{r{-}1}$ (which already includes the elements of $C_{m}$). As $z\neq 0^{l}$, we may write $\key_{r}=z\sigma w=0^t(\tau{+}1) y \sigma w$. By the construction rule, $C_{r}$ is inserted immediately after  $\tau y\sigma w0^t$. It is sufficient to prove that $\tau y \sigma w  0^t$ precede $\last(C_m)= \sigma w0^l$.
		
		First, note that, by Lemma~\ref{lem:decrease-by-one }, $0^t\tau y \sigma w$ is a key-word. If $0^t\tau y\neq 0^{l}$, we are done by the induction hypothesis. Otherwise, $\tau y \sigma w0^t= 0^{|\tau y|} \sigma w 0^t$, and we are done by the previous lemma.
	\end{proof}

	We can now prove Proposition~\ref{prop:first-in-order}.
	
	\begin{proof}[Proof of Proposition~\ref{prop:first-in-order}]
		Clearly, it suffices to prove for $r=m{+}1$.
		Write $\key_{m{+}1}=0^l(\sigma{+}1)w$. $C_{m{+}1}$ is inserted after $\sigma w 0^l$. By Lemma~\ref{lem:decrease-by-one }, $0^l\sigma w$ is a key-word. If $0^l\sigma w=\key_m$, we are done since we then have $\first(C_m)\leq \sigma w0^l<\first(C_{m{+}1})$. Otherwise, we have $0^l\sigma w<_\colex\key_m<_\colex\key_{m{+}1}=0^l(\sigma{+}1)w$. Therefore, $\key_m = z\sigma w$ where $z\neq 0^{l}$. Hence, by Lemma~\ref{lemma:before-keyword}, all elements of $C_m$ precede $\sigma w0^l$, and thus, in particular, $\first(C_m)<\first(C_{m{+}1})$.
	\end{proof}
	
	Finally, Theorem~\ref{thm:D-infinite} follows.
	
	\begin{proof}[Proof of Theorem~\ref{thm:D-infinite}]
		
		It is sufficient to show that $D(n,k)$ is a prefix of $D(n,k{+}1)$. By corollary~\ref{cor:Dnk-ends-in-k-1}, $D(n,k)=D_{c(n,k){-}1}$ ends in $(k{-}1)0^{n{-}1}$. $\key_{c(n,k)}$ is the $\colex$ minimal key-word in $[k{+}1]^n\setminus [k]^n$, hence $\key_{c(n,k)}=0^{n{-}1}k$. By the construction rule, $C_{c(n,k)}$ is inserted after $(k{-}1)0^{n{-}1}$. By Proposition~\ref{prop:first-in-order}, every cycle $C_m$, $m>c(n,k)$ is inserted after $\first(C_{c(n,k)})$ (and thus after $(k{-}1)0^{n{-}1}$). Consequently, $D(n,k)=D_{c(n,k){-}1}$ is a prefix of $D(n,k{+}1)=D_{c(n,k{+}1){-}1}$.
		%Consider the cycle joining construction of $\rpmx(n,k{+}1)$. Since the cycles are inserted in $\rlex$ order of their keys, for some $m<c(n,k{+}1)$, $\rpmx(n,k)=D_m$ (see Definition~\ref{def:construction-rule}).
		%{\color{blue} This is true because if a keyword includes the letter $k$, then it must be $\colex$ greater than all the keywords that do not include $k$.  }
		%The last word of $\rpmx(n,k)=D_m$ is $(k{-}1)0^{n{-}1}$, and the sequence $D_{m{+}1}$ is obtained by inserting the cycle $C_{m{+}1}$ into $D_m$. Now, $\key(C_{m{+}1})={\color{blue}\first(C_{m{+}1})=}0^{n{-}1}k$, as this is the $\rlex$ minimal key-word in $[k{+}1]^n\setminus[k]^n$. By the construction rule, $C_{m{+}1}$ is inserted after $(k{-}1)0^n$, i.e., it is appended to the end of $\rpmx(n,k)=D_m$. Therefore, by Proposition~\ref{prop:first-in-order}, every cycle $C_r$, $r>m{+}1$, is also inserted after $D_m$, and the claim follows.
	\end{proof}

	\subsection{A Nesting Structure}
	\label{sec:containment}
	
	% \sout{Before proving our main theorem} We next provide insights about the structure of our cycle joining construction. 
	
	We next describe an interesting nesting structure that exists in our cycle joining construction. This structure will be used later to prove that our sequence is the reverse of the prefer-max sequence.
	
	We start with an immediate observation on the structure of $D_m$, implied by Proposition~\ref{prop:first-in-order}, which we call the \emph{parenthesis property}. If $m<r$, then, by Proposition~\ref{prop:first-in-order}, $C_r$ was not inserted before $C_m$. Therefore, either the cycle $C_r$ entirely follows $C_m$, or it is embedded into $C_m$. We term this property \emph{the parenthesis property} as we consider (virtual) parenthesis that wrap each cycle, as in Example~\ref{exm:construction}.  We turn to formalize the parenthesis property, and show when either of the cases holds.
	
	%{\color{red} Alternative suggestion: Therefore, when we insert a cycle $C_r$ into $D_{r{-}1}$, for each $m<r$, either we insert $C_r$ after $C_m$, or between two elements of $C_m$. Hence, the following \emph{parenthesis property} holds. }

	\begin{cor}
		[The Parenthesis Property]
		\label{cor:parenthesis}
		If $C_m$ and $C_r$ are two cycles such that $m<r$, then one of the following holds:
		\begin{enumerate}
			\item $\last(C_m)<\first(C_r)$, or
			
			\item $\first(C_m)<\first(C_r)\leq \last(C_r)<\last(C_m)$.
		\end{enumerate}
	\end{cor}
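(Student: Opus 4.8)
The plan is to argue from Proposition~\ref{prop:first-in-order} together with the way cycles are inserted, namely that inserting $C_r$ into the current sequence splices a contiguous block into one position, so relative to any previously-present contiguous block it lands either entirely outside or entirely inside. First I would fix $m<r$ and consider the moment in the construction at which $C_r$ is inserted, i.e. the passage from $D_{r-1}$ to $D_r$; by this point all of $C_m$ is already present (since $m<r$) and, crucially, the elements of $C_m$ form a \emph{contiguous} block of $D_{r-1}$. That contiguity is itself something I would justify by induction on the construction step: when $C_m$ was inserted it was contiguous, and every later insertion $C_j$ ($m<j<r$) splices a block after a single word $\sigma w0^l$, which either lies strictly outside the block of $C_m$ (leaving it contiguous) or strictly inside it (again leaving it contiguous, now with $C_j$ nested within). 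So at the time $C_r$ is inserted, $C_m$ occupies a contiguous interval $[\first(C_m),\dots,\last(C_m)]$ of $D_{r-1}$.

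Next I would use the construction rule: $C_r$ is inserted immediately after a single word $u=\sigma w0^l$ where $\key_r=0^l(\sigma{+}1)w$. Since $C_r$ is spliced in right after the single word $u$, the whole block $C_r$ ends up either entirely after $\last(C_m)$, entirely before $\first(C_m)$, or entirely inside the interval of $C_m$ — these are the only options because $u$ sits in exactly one of the three regions ``before $\first(C_m)$'', ``within $[\first(C_m),\last(C_m)]$'', ``after $\last(C_m)$''. Proposition~\ref{prop:first-in-order} rules out the middle possibility $\first(C_r)<\first(C_m)$: it gives $\first(C_m)<\first(C_r)$. Hence either $C_r$ lies entirely after $\last(C_m)$, which is exactly case~1 ($\last(C_m)<\first(C_r)$), or $C_r$ lies entirely inside the $C_m$-interval, giving $\first(C_m)<\first(C_r)\le\last(C_r)<\last(C_m)$, which is case~2. (The inequality $\first(C_r)\le\last(C_r)$ is immediate, with equality exactly when $|C_r|=1$.) Finally I would note that these regions and inequalities are stable under all subsequent insertions $C_j$, $j>r$: each such insertion again splices a block after one word, so it cannot interleave the already-contiguous blocks of $C_m$ and $C_r$, and therefore the relations proved in $D_r$ persist in $D(n,k)$ and in $D$.

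The main obstacle is the contiguity-of-cycles claim — that at every stage each already-inserted $C_i$ occupies a contiguous interval, and insertions are ``block splices after a single word''. This is the load-bearing structural fact; once it is in hand, the corollary is a short case analysis fed by Proposition~\ref{prop:first-in-order}. I would prove contiguity by a clean induction on the construction step, using only Definition~\ref{def:construction-rule} (which inserts the sequence $C_{m+1}$ immediately after one word) and Proposition~\ref{prop:first-in-order} (to know a later cycle never jumps in front of an earlier one, so it can only land wholly outside or wholly inside a given contiguous block rather than straddling its boundary).
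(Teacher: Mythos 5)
Your proof is correct and follows essentially the same route as the paper, which presents the corollary as an immediate consequence of Proposition~\ref{prop:first-in-order} together with the fact that each cycle is spliced in as a contiguous block after a single word; the paper gives no more detail than the informal remark preceding the statement, so your explicit contiguity-by-induction argument simply makes precise what the paper leaves implicit. One small boundary nit: when the insertion point $u$ equals $\last(C_m)$ it lies in your closed middle region yet the block lands entirely after $\last(C_m)$ (so it should route to case~1, not case~2), but the disjunction you are proving holds either way.
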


	%This corollary distinguishes between two possible cases. Given two cycles, either one of them entirely follows the other, or, alternatively, one is \emph{embedded} into the other. 
	
	\begin{definition}
		We say that $C_r$ is embedded in $C_m$, if $\first(C_m)<\first(C_r)\leq \last(C_r)<\last(C_m)$. 
		In addition, $C_r$ is said to be immediately embedded in $C_m$ if there is no cycle $C_l$ such that $C_r$ is embedded in $C_l$ and $C_l$ is embedded in $C_m$. 
		We inductively define the statement: ``$C_r$ is $t$-embedded in $C_m$":
		\begin{itemize}
			\item $C_r$ is $1$-embedded in $C_m$ if it is immediately embedded in $C_m$.
			\item $C_r$ is $(t{+}1)$-embedded in $C_m$ if there exists a cycle $C_l$ such that $C_r$ is $t$-embedded in $C_l$ and $C_l$ is $1$-embedded in $C_m$. 
		\end{itemize} 
	\end{definition}

	We now investigate relations between key-words of cycles $C_m$ and $C_r$, considering the two possible cases: When $C_r$ immediately follows $C_m$, and when it is immediately-embedded in $C_m$. First, we show that if we insert a cycle $C_r$ after $\last(C_m)$, then $\key_r$ is obtained by increasing the first non-zero symbol in $\key_m$ by one.

	\begin{lemma}
		\label{lemma:Cr-successor-of-Cm}
		Write  $\key_m=0^l(\sigma{+}1)w$.  If $\first(C_r)$ is the successor of $\last(C_m)$, then $\key_r=0^l(\sigma{+}2)w$.
	\end{lemma}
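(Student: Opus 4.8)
The plan is to show that the hypothesis forces $C_r$ to be exactly the cycle that the construction rule inserts immediately after $\last(C_m)$, and then to read $\key_r$ off Definition~\ref{def:construction-rule}. Throughout write $\last(C_m)=(\sigma{+}1)w0^l$, as in the statement. I would begin with two preliminary facts. \emph{(i) One has $r>m$.} Since $\first(C_r)$ is the successor of $\last(C_m)$ in $D$ it strictly follows $\last(C_m)$, while $\first(C_m)\le\last(C_m)$ because the words of a cycle retain the relative order they receive when the cycle is inserted, and later steps only place new cycles between already-adjacent words; hence $\first(C_m)<\first(C_r)$, whereas $r\le m$ would give $\first(C_r)\le\first(C_m)$ (by Proposition~\ref{prop:first-in-order} if $r<m$, trivially if $r=m$), a contradiction. \emph{(ii) A key-word other than $0^n$ ends in a nonzero symbol.} Indeed, a word $x$ is $\colex$-maximal among its rotations iff no rotation of $R(x)$ is $\lex$-larger than $R(x)$; since a word that is not all zeros has a rotation beginning with its (positive) maximal symbol, a key-word $x\ne 0^n$ satisfies that $R(x)$ begins with a positive symbol, i.e.\ $x$ ends with one.

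I would then track the predecessor of $\first(C_r)$ through the construction. When $C_r$ is inserted, say immediately after a word $u$ (which lies in some $C_j$ with $j<r$), the predecessor of $\first(C_r)$ becomes $u$. Inserting a later cycle $C_t$ (necessarily $t>r$) immediately after a word $v$ replaces the adjacency $(v,x)$ by $(v,\first(C_t),\dots,\last(C_t),x)$, which changes the predecessor of $\first(C_r)$ only when $v$ is its current predecessor, in which case the new predecessor is $\last(C_t)$. Hence the predecessor of $\first(C_r)$ in $D$ is either $u$ or $\last(C_t)$ for some $t>r$. By (i) this predecessor is $\last(C_m)$ with $m<r$, and distinct cycles are disjoint, so it cannot be any $\last(C_t)$ with $t>r$; therefore it is $u$. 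That is, the construction rule inserts $C_r$ immediately after $\last(C_m)=(\sigma{+}1)w0^l$.

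Finally I would compute $\key_r$. Write $\key_r=0^{l'}(\rho{+}1)w'$ with $0^{l'}$ its run of leading zeros; by Definition~\ref{def:construction-rule}, $C_r$ is inserted immediately after $\rho w'0^{l'}$, so by the previous paragraph $\rho w'0^{l'}=(\sigma{+}1)w0^l$ as words. Comparing first symbols gives $\rho=\sigma{+}1$, hence $w'0^{l'}=w0^l$. If $l'>l$ then $w=w'0^{l'-l}$, so $\key_m$ would end in $0$; if $l'<l$ then $w'=w0^{l-l'}$, so $\key_r$ would end in $0$; both contradict (ii) since neither $\key_m$ nor $\key_r$ is $0^n$. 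So $l'=l$ and $w'=w$, whence $\key_r=0^{l'}(\rho{+}1)w'=0^l(\sigma{+}2)w$, as required. The one genuinely delicate point I anticipate is the predecessor-tracking in the second paragraph --- arguing that a cycle insertion alters adjacencies only at its single insertion point, so that the predecessor of $\first(C_r)$ can only be $u$ or some $\last(C_t)$ with $t>r$; everything else follows directly from the construction rule and Proposition~\ref{prop:first-in-order}.
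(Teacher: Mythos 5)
The proposal is correct and takes essentially the same approach as the paper: it identifies the insertion point of $C_r$ as $\last(C_m)$ via the construction rule and Proposition~\ref{prop:first-in-order}, then matches the two decompositions $0^l(\sigma{+}1)w$ and $0^{l'}(\tau{+}1)w'$ using the fact that key-words other than $0^n$ end in a nonzero symbol. You merely make explicit two steps the paper leaves implicit, namely the predecessor-tracking argument showing the predecessor of $\first(C_r)$ is still the word it was inserted after, and the proof that key-words end in a nonzero symbol.
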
   
	\begin{proof}
		Consider the sequence $D_r$, obtained by inserting $C_r$ into $D_{r{-}1}$. Write $\key_r=0^{l'}(\tau{+}1)w'$. By the construction rule, $C_r$ was inserted after $\tau w'0^{l'}$. By Proposition~\ref{prop:first-in-order}, $D_{r{-}1}$ already includes the elements of $C_m$ thus $\tau w'0^{l'}=\last(C_m)=(\sigma{+}1) w 0^l$. As $0^l(\sigma{+}1)w$ and $0^{l'}(\tau{+}1)w'$ are key-words, both $w$ and $w'$ end in a non-zero symbol. Hence, equality $\tau w'0^{l'}=(\sigma{+}1) w 0^l$ proves that $l=l'$, $\sigma{+}1=\tau$, and $w=w'$. As a result, $\key_r=0^{l'}(\tau+1)w'=0^l(\sigma{+}2)w$, as required.
	\end{proof}

	%\begin{definition}
	%	We say that $C_{i_1}C_{i_1}\dots C_{i_t}$ is a sequence of cycles, if for every $j<t$, $\first(C_{i_{j+1}})$ is the successor of $\last(C_{i_j})$ in $D$. 
	%\end{definition}

	%\begin{lemma}
	%	If $C_{i_1}C_{i_1}\dots C_{i_t}$ is a sequence of cycles, then $i_1<i_2<\cdots<i_t$.
	%\end{lemma}

	%\begin{lemma}
	%	\label{sequence}
	%	Let $C_{i_1}C_{i_1}\dots C_{i_t}$ be a sequence of cycles. Write, $key(C_{i_0})=0^l\sigma+1w$. Then, for every $j\leq k$, $key(C_{i_j})=0^l(\sigma+j)w$.
	%\end{lemma}
	
	%\begin{proof}
	%	Assume by induction that $key(C_{i_j})=0^l(\sigma+j)w$ for $j<k$. Hence, $last(C_{i_j})=(\sigma+j)w0^l$. Thus, $nxt((\sigma+j)w0^l)=w0^l(\sigma+j+1)$ or $nxt((\sigma+j)w0^l)=w0^{l+1}$. Since $\neg(first(w0^{l+1}))$, we have 
	%	$$nxt((\sigma+j)w0^l)=w0^l(\sigma+j+1)=first(C_{i_{j+1}}).$$	We conclude that $key(C_{i_{j+1}})=0^l(\sigma+j+1)w$.
	%\end{proof}
	
	Now, we show that if we choose to embed $C_r$ in $C_m$, then $\key_m$ is obtained by zeroing the first non-zero symbol in $\key_r$.
	
	\begin{lemma}
		\label{embedding-properties}
		Assume that $C_r$ is immediately embedded in $C_m$. Write $\key(C_r)=0^i(\sigma{+}1)0^jw$ where $w$ does not start with $0$. Then,
		\begin{itemize}
			\item $\key(C_m) =0^{i{+}1{+}j}w$.
			\item If $u\in C_m$ and $\last(C_r)<u$, then $u=0^{j_2}w0^{i{+}1{+}j_1}$ where $j_1{+}j_2=j$. 
		\end{itemize}
	\end{lemma}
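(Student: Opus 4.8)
The plan is to unwind the construction rule one step at a time, starting from the cycle $C_r$ and tracking which word precedes it and which key-word that word is a rotation of. Suppose $C_r$ is immediately embedded in $C_m$. Writing $\key(C_r)=0^i(\sigma{+}1)0^j w$ with $w$ not starting with $0$, the construction rule tells us that $C_r$ was inserted immediately after the word $u_0 := \sigma 0^j w 0^i \in D_{r-1}$, and (by Proposition~\ref{prop:first-in-order}, since $C_m$ was built before $C_r$) this word $u_0$ lies strictly between $\first(C_m)$ and $\last(C_m)$ in $D$. By Lemma~\ref{lem:decrease-by-one }, the rotation of $u_0$ that is a key-word is $\key := 0^i \sigma 0^j w$, and this key-word is $\colex$-smaller than $\key(C_r)$, hence the cycle $C_{r'}$ with $\key(C_{r'}) = 0^i\sigma 0^j w$ satisfies $r' < r$.

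The first bullet, $\key(C_m) = 0^{i+1+j}w$, should follow by iterating this observation $\sigma$ more times. Each time, the current word (some rotation of the form ``$\sigma'$ followed by tail then trailing zeros'', with the tail having one fewer nonzero symbol consumed) is preceded, via the construction rule, by a word whose key-rotation has the leading nonzero symbol decremented by one; and crucially each of these intermediate words must itself be an interior element of $C_m$ — because $C_m$ is a cycle between $\first(C_m)$ and $\last(C_m)$ and the parenthesis property (Corollary~\ref{cor:parenthesis}) forbids any intermediate cycle from ``breaking out'' of $C_m$. Repeating until the leading nonzero symbol hits $0$, we reach the word $0^j w 0^{i+1}$, whose key-rotation is $0^{i+1+j}w$; since immediacy of the embedding means no cycle strictly between $C_r$ and $C_m$ contains this word, that key-rotation must be exactly $\key(C_m)$, i.e.\ $\key(C_m)=0^{i+1+j}w$. (Here I would want to invoke Lemma~\ref{lemma:Cr-successor-of-Cm} in reverse, or argue directly from the construction rule, to identify $C_m$ as precisely the cycle whose key-word is $0^{i+1+j}w$ rather than some cycle even higher up.)

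For the second bullet, I would use the explicit description of the order within $C_m$. By the first bullet, $\key(C_m) = 0^{i+1+j}w$, so $C_m$ as an ordered sequence runs from $\first(C_m) = w 0^{i+1+j}$ (taking $l = i+1+j$ and peeling off the single nonzero ``$(\sigma{+}1)$'' appropriately — more precisely, writing $w = (\xi{+}1)w''$ one has $\key(C_m) = 0^{i+1+j}(\xi{+}1)w''$, $\first(C_m) = w'' 0^{i+1+j}(\xi{+}1)$), with each word $\alpha v$ followed by $v\alpha$, ending at $\last(C_m) = (\xi{+}1)w'' 0^{i+1+j}$. Now $\last(C_r)$ is a rotation of $\key(C_r)=0^i(\sigma{+}1)0^jw$; one checks $\last(C_r) = (\sigma{+}1)0^j w 0^i$. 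I want to locate this word inside the ordered list $C_m$ and show every word after it has the stated form. The cyclic shifts of $\key(C_m)$, read in the order they appear in $C_m$, sweep the ``block'' of $i+1+j$ trailing zeros rightward and the nonzero content of $w$ leftward; the words $u$ with $\last(C_r) < u$ in $C_m$ are exactly those shifts in which the zero-block has been split so that $w$ sits in the middle flanked by zeros, i.e.\ $u = 0^{j_2} w 0^{i+1+j_1}$ with $j_1 + j_2 = j$. Establishing the precise cutoff — that $\last(C_r)$ sits at exactly the shift separating the ``$w$ still has a zero immediately before it'' words from the rest — is the delicate bookkeeping step; I would do it by writing out one generic rotation of $\key(C_m)$ and comparing positions of the maximal-$\colex$ rotation.

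The main obstacle I anticipate is the inductive argument for the first bullet: making rigorous the claim that each of the $\sigma$ intermediate ``decrement'' steps stays inside $C_m$ and that immediacy of the embedding pins down $C_m$ exactly (rather than an ancestor of $C_m$). This requires carefully combining the construction rule, Lemma~\ref{lem:decrease-by-one }, the parenthesis property, and the definition of \emph{immediately embedded}; everything after that (the second bullet) is a finite combinatorial computation on rotations of a fixed word, tedious but routine.
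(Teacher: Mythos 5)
Your strategy for the first bullet---walk backwards through the chain of insertion points, decrementing the leading nonzero symbol one step at a time until it reaches $0$---is essentially the paper's, but the step you explicitly leave open is the actual crux, and the supporting claim you make along the way is false as stated. The intermediate words $\tau'0^jw0^i$ with $0<\tau'\leq\sigma$ are \emph{not} elements of $C_m$: each one is $\last$ of its own cycle (the one with key $0^i\tau'0^jw$), and those cycles are merely immediately embedded in $C_m$. What is actually needed is the structural fact that the predecessor of $\first(C)$, for a cycle $C$ immediately embedded in $C_m$, is either an element of $C_m$ or the $\last$ of another cycle immediately embedded in $C_m$ (anything else would contradict immediacy via the parenthesis property). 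Iterating this and applying Lemma~\ref{lemma:Cr-successor-of-Cm} $l$ times, the chain terminates at a word $v=\tau 0^jw0^i\in C_m$ whose key-rotation, by Lemma~\ref{lem:decrease-by-one }, is $0^i\tau 0^jw=\key_m$; then $\tau=0$ is forced, because $\tau>0$ would make $v=\last(C_m)$, so the chain---and hence $C_r$---would follow $C_m$ rather than be embedded in it. This one argument simultaneously rules out the chain terminating early and the ``cycle even higher up'' scenario you could not eliminate; without it the first bullet is not established. (A small slip: the terminal word of the chain is $0^{j+1}w0^i$, not $0^jw0^{i+1}$; both are rotations of $0^{i+1+j}w$, but the distinction matters below.)

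For the second bullet your plan rests on a misconception: you propose to ``locate $\last(C_r)$ inside the ordered list $C_m$'', but $\last(C_r)=(\sigma{+}1)0^jw0^i$ is a rotation of $0^i(\sigma{+}1)0^jw$ and is not an element of $C_m$ at all; the comparison $\last(C_r)<u$ takes place in the global order $D$, not inside $C_m$. The clean route is a byproduct of the corrected first bullet: the whole embedded chain ending in $C_r$ is inserted between $v=0^{j+1}w0^i\in C_m$ and the cycle-successor of $v$ in $C_m$, so the elements $u\in C_m$ with $\last(C_r)<u$ are exactly those with $v<u$ in $C_m$'s internal order, namely $0^{j}w0^{i+1},0^{j-1}w0^{i+2},\dots,w0^{i+1+j}$, i.e.\ $u=0^{j_2}w0^{i+1+j_1}$ with $j_1{+}j_2=j$. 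No separate ``cutoff'' computation among the rotations of $\key(C_m)$ is possible by comparing against $\last(C_r)$, since that word never occurs among them.
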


	\begin{proof}
		By Lemma~\ref{lem:decrease-by-one }, $0^{i{+}1{+}j}w$ is a key-word. Hence, to prove the first item,  we need to show that $0^{i{+}1{+}j}w\in C_m$. 
		
		Since $C_r$ is immediately embedded in $C_m$, the predecessor of $\first(C_r)$ is $v\in C_m$, or $\last(C)$, for some cycle $C$ that is also immediately embedded in $C_m$.  By repeatedly applying this reasoning, we construct a sequence of cycles 
		\[ C_{i_0},C_{i_1},\dots,C_{i_l}\text{ such that:}\]
		\begin{itemize}
			\item $C_{i_l}=C_r$.
			\item For each $ t\in\{1,\dots,l\}$, the predecessor of $\first(C_{i_t})$ is $\last(C_{i_{t{-}1}})$.
			\item The predecessor of $\first(C_{i_0})$ is a word $v\in C_m$.
		\end{itemize}
		
		Now, by applying Lemma~\ref{lemma:Cr-successor-of-Cm} $l$-times, $\key_{i_0}= 0^i(\tau{+}1)0^jw$, where $\sigma{+}1=\tau{+}1{+}l$.  Therefore, $\first(C_{i_0})=0^jw0^i(\tau{+}1)$, and, by the construction rule, 
		\[v=\tau 0^j w 0^i.\] 
		
		To prove that indeed $0^{i{+}1{+}j}w\in C_m$, we need to show that $\tau=0$ as $\tau=0$ implies that $0^{i{+}1{+}j}w$ is a rotation of $v\in C_m$. 
		 As $\key_{i_0}= 0^i(\tau{+}1)0^jw$ is a key-word, by Lemma~\ref{lem:decrease-by-one }, $0^i\tau 0^jw$ is also a key-word. As it is a rotation of $v\in C_m$,  $0^i\tau 0^jw=\key_m$. If $\tau>0$, then, by the definition of $last$, $v=\tau 0^jw0^i=\last(C_m)$, in contradiction to the fact that $C_r$ is embedded in $C_m$. Hence, $\tau=0$ and the first item holds. Moreover, the second item easily follows as $v,u,last(C_m)\in C_m$ and $$v=0^{j+1}w0^i<u\leq\last(C_m)=w0^{i+1+j}.\qedhere$$
		%
		%Take a maximal sequence of cycles that begins in $C_m$ and let $C_{m+r}$ be the last cycle in this sequence. By lemma \ref{sequence}, $key(C_{m+r})=0^i(\sigma+1+r)0^jw$. By the parentheses property, $C_{m+r}$ is immediately embedded in $C_k$. And finally, by the maximality of the sequence of cycles, $nxt(last(C_{m+r}))\in C_k$.
		%
		%As a result, we have:
		%$$nxt(last(C_{m+r}))=nxt((\sigma+1+r)0^jw0^i)\in\{\ 0^jw0^i0, 0^jw0^i(\sigma+2+r)\}.$$ If $nxt((\sigma+1+r)0^jw0^i)= 0^jw0^i(\sigma+2+r)$, then $last((\sigma+2+r)0^jw0^i)$ which implies $first(0^jw0^i(\sigma+2+r))$. But $first(nxt(last(C_{m+r})))$ contradicts the maximality of our sequence of cycles. Hence, $nxt((\sigma+1+r)0^jw0^i)= 0^jw0^i0\in C_k$. Now, $0^{i+1+j}w$ is a rotation of $0^jw0^i0$ thus $0^{i+1+j}w\in C_k$ as required.
		%
		%For proving the second item, we note that $last(C_k)=w0^{i+1+j}$. As we have seen, the first element in $C_k$ that follows $C_m$ (and follows $C_{m+r}$) is $nxt(last(C_{m+r}))= 0^jw0^i0$. As a result, if $u\in C_k$ and $last(C_m)<u$, then 
		%$$0^jw0^i0\leq u \leq w0^{i+1+j}$$ which implies that $u=0^{j_2}w0^{i+1+j_1}$
	\end{proof}

	By applying the previous lemma several times, we conclude the next corollary.
	
	\begin{cor}
		\label{cor:embedding-deletes-zeroes}
		Assume that $C_r$ is $t$-embedded in $C_m$. Write $\key(C_r)=uv$ where $u$ is the minimal prefix of $\key(C_r)$ that includes $t$ non-zero symbols. Then, $\key(C_m)=0^{|u|}v$.
	\end{cor}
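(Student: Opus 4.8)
The plan is to prove the corollary by induction on $t$, with Lemma~\ref{embedding-properties} serving both as the base case and as the engine of the inductive step.

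\textbf{Base case $t=1$.} Here ``$C_r$ is $1$-embedded in $C_m$'' is, by definition, the same as ``$C_r$ is immediately embedded in $C_m$'', so Lemma~\ref{embedding-properties} applies verbatim. Writing $\key(C_r)=0^i(\sigma{+}1)0^jw$ with $w$ not beginning with $0$, the minimal prefix of $\key(C_r)$ containing one non-zero symbol is $u=0^i(\sigma{+}1)$, so $|u|=i{+}1$ and $v=0^jw$; thus $0^{|u|}v=0^{i{+}1{+}j}w$, which is exactly $\key(C_m)$ by the lemma. So the $t=1$ instance is merely a reformulation of Lemma~\ref{embedding-properties}.

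\textbf{Inductive step.} Assume the statement for $t$ and suppose $C_r$ is $(t{+}1)$-embedded in $C_m$. By definition there is a cycle $C_l$ with $C_r$ being $t$-embedded in $C_l$ and $C_l$ being $1$-embedded in $C_m$. Applying the induction hypothesis to the pair $(C_r,C_l)$, and writing $\key(C_r)=uv$ where $u$ is the minimal prefix of $\key(C_r)$ containing $t$ non-zero symbols, we get $\key(C_l)=0^{|u|}v$. Since $C_l$ is embedded in $C_m$ we have $\first(C_m)<\first(C_l)$, so $\first(C_l)\neq 0^n$ (recall $0^n$ is the first word of $D$), hence $C_l\neq C_0$ and $\key(C_l)\neq 0^n$; therefore $v$ contains at least one non-zero symbol, and we may write $v=0^a(\rho{+}1)v'$ where $(\rho{+}1)$ is the first non-zero symbol of $v$. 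Then $\key(C_l)=0^{|u|{+}a}(\rho{+}1)v'$, and since $C_l$ is immediately embedded in $C_m$, Lemma~\ref{embedding-properties} applied to the pair $(C_l,C_m)$ yields $\key(C_m)=0^{|u|{+}a{+}1}v'$. It remains only to identify this with the claimed form: $0^a(\rho{+}1)$ is precisely the block of $\key(C_r)$ immediately following $u$, so $u\,0^a(\rho{+}1)$ is the minimal prefix of $\key(C_r)$ containing $t{+}1$ non-zero symbols, its length is $|u|{+}a{+}1$, and the complementary suffix is $v'$. Hence $\key(C_m)=0^{|u|{+}a{+}1}v'$ has exactly the required form. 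The same induction also shows, as a byproduct, that $t$-embeddedness forces $\key(C_r)$ to contain at least $t$ non-zero symbols, so that the statement is well-posed.

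The argument is essentially bookkeeping, since the mathematical content is already contained in Lemma~\ref{embedding-properties}. The one point requiring a little care is the passage from $\key(C_r)=uv$ to $\key(C_l)=0^{|u|}v$ followed by stripping off the next non-zero symbol: the zeros introduced by the prefix $0^{|u|}$ must be merged with the leading zeros of $v$ into a single run before Lemma~\ref{embedding-properties} can be invoked, which is why we rewrite $v$ as $0^a(\rho{+}1)v'$ first. I do not anticipate any genuine obstacle beyond this.
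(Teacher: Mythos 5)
Your proof is correct and follows the same route as the paper, which simply states that the corollary follows ``by applying the previous lemma several times'': your induction on $t$, with Lemma~\ref{embedding-properties} as both base case and inductive engine, is exactly the careful writing-out of that iteration, including the necessary bookkeeping of merging the zero runs before each reapplication of the lemma.
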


	\subsection{Equivalence to the Reverse of the Prefer-Max}
	\label{sec:correctness}

	We are ready to show that we indeed construct the reverse of prefer-max.
	
	\begin{theorem}
		\label{thm:construction-correctness}
		$D(n,k)=\rpmx(n,k)$.
	\end{theorem}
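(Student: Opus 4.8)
The plan is to reduce the claim, via a double reversal, to a single local property of the construction, which can then be matched against the greedy definition of $\pmx$. Write $D(n,k)=d_0,\dots,d_{N{-}1}$ with $N=k^n$, and let $E(n,k)=e_0,\dots,e_{N{-}1}$ be defined by $e_i=R(d_{N{-}1{-}i})$, i.e.\ reverse the order of the words and also reverse each word. Unwinding the definition of $\rpmx(n,k)$, the assertion $D(n,k)=\rpmx(n,k)$ is literally the assertion $E(n,k)=\pmx(n,k)$, so it suffices to prove the latter. By Theorem~\ref{thm:D(n,k)-DB} the sequence $E(n,k)$ is again an $(n,k)$-De Bruijn sequence, and by Corollary~\ref{cor:Dnk-ends-in-k-1} its first word is $e_0=R\bigl((k{-}1)0^{n{-}1}\bigr)=0^{n{-}1}(k{-}1)$, which is exactly the seed word of the prefer-max greedy algorithm.

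Since $\pmx(n,k)$ is by definition the deterministic output of that greedy algorithm, and $E(n,k)$ is a full De Bruijn sequence starting at the correct seed, it is enough to show that $E(n,k)$ makes the greedy choice at every step: whenever $e_i=\sigma x$ is followed by $e_{i{+}1}=x\tau$, the symbol $\tau$ is the largest one with $x\tau\notin\{e_0,\dots,e_i\}$; then the greedy process reproduces $e_0,e_1,\dots$ word by word and, having exhausted all of $[k]^n$, terminates, so $\pmx(n,k)=E(n,k)$. As $E(n,k)$ is a De Bruijn sequence, $x\tau$ itself does not occur among $e_0,\dots,e_i$, so the only content of this requirement is that $x\tau'$ occur strictly before $e_{i{+}1}$ for every $\tau'$ with $\tau<\tau'\le k{-}1$. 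Translating back through the two reversals — a consecutive pair $e_i,e_{i{+}1}$ of $E(n,k)$ corresponds to a consecutive pair $q,p$ of $D(n,k)$ with $p$ the $D$-successor of $q$; writing $q=\alpha w$ and $p=w\beta$ as forced by the De Bruijn shape, one has $x=R(w)$, $\tau=\alpha$, and the requirement becomes $q<\tau'w$ — the theorem reduces to the following \emph{key property}: for every pair of consecutive words $q=\alpha w$, $p=w\beta$ in $D(n,k)$ and every symbol $\tau'$ with $\alpha<\tau'\le k{-}1$, we have $q<\tau'w$.

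To prove the key property, I would split on the three possible shapes of a consecutive pair $(q,p)$ in $D(n,k)$ allowed by the parenthesis picture of Corollary~\ref{cor:parenthesis}: (i) $p$ is the cycle-successor of $q$ inside their common cycle, with nothing inserted between them — so the step is a pure rotation and $\beta=\alpha$; (ii) $p=\first(C_r)$ for the unique cycle $C_r$ that the construction inserts immediately after $q$; and (iii) $q=\last(C_r)$ for some cycle $C_r$, with $p$ the word the construction leaves immediately after $\last(C_r)$. The engine common to all cases is the observation that $\tau'w$ and $\alpha w$ agree except in the single position carrying the leading symbol, where $\tau'>\alpha$; hence the $\colex$-maximal rotation of $\tau'w$ is strictly $\colex$-greater than the $\colex$-maximal rotation of $\alpha w$, so if $C_{r'}$ is the cycle containing $\tau'w$ and $C_m$ the cycle containing $q$, then $m<r'$. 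By Corollary~\ref{cor:parenthesis}, either $C_{r'}$ lies entirely after $C_m$ — which gives at once $q\le\last(C_m)<\first(C_{r'})\le\tau'w$ — or $C_{r'}$ is embedded in $C_m$. In cases (i) and (ii) the word $q$ turns out to sit early enough in $C_m$ that an embedded copy of $C_{r'}$ still follows it: for instance, for $\tau'=\alpha{+}1$ in case (ii) one has $\tau'w=\last(C_r)$ and $q<\first(C_r)\le\last(C_r)$ by the construction rule, while the remaining values of $\tau'$ are handled using Lemma~\ref{lem:decrease-by-one } together with Lemmas~\ref{lemma:finish-with-zeros} and~\ref{lemma:before-keyword}.

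The main obstacle is case (iii). There $q=\last(C_r)$ is the \emph{last} word of its own cycle, so an embedded $C_{r'}$ would necessarily place $\tau'w$ before $q$ and break the property; one must therefore show that $C_{r'}$ cannot be embedded in $C_r$ at all, but must follow it entirely. This is exactly where the fine structure of embeddings is needed. Lemma~\ref{lemma:Cr-successor-of-Cm}, Lemma~\ref{embedding-properties}, and Corollary~\ref{cor:embedding-deletes-zeroes} describe precisely how a key-word changes when one passes to a consecutive cycle (increment the first non-zero symbol) or to an embedded cycle (zero out the first non-zero symbol); combining these with the explicit forms $\key_r=0^i(\sigma{+}1)0^j\omega$ and $\last(C_r)=(\sigma{+}1)0^j\omega0^i$, and tracking how leading zeros, the first non-zero symbol, and the common tail $w$ move between the rotation classes of $\alpha w$ and of $\tau'w$, should force $C_{r'}$ to lie after $q$ in every sub-case. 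Carrying out this key-word bookkeeping is the fiddly combinatorial core of the proof; once it is complete the key property holds, and with it $E(n,k)=\pmx(n,k)$, that is, $D(n,k)=\rpmx(n,k)$.
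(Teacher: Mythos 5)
Your reduction is sound and is essentially the paper's: after unwinding the double reversal, your ``key property'' ($q=\alpha w$ precedes $\tau'w$ for every $\tau'>\alpha$) is, modulo transitivity, exactly the paper's Proposition~\ref{prop:reverse-rule} ($\tau w<(\tau{+}1)w$ for every word), and the paper likewise combines it with Corollary~\ref{cor:Dnk-ends-in-k-1} and the uniqueness of the greedy sequence. The problem is that you never actually prove the key property, and that property \emph{is} the theorem --- everything else is bookkeeping.

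Concretely, two issues. First, you mislocate the difficulty. Your case (iii) ($q=\last(C_r)$) is in fact the easy one: for $r>0$, $\last(C_r)=(\sigma{+}1)w'0^l$ begins with a non-zero symbol, so $\alpha\geq 1$ and $|\alpha w|_0=|\tau'w|_0$; since by Corollary~\ref{cor:embedding-deletes-zeroes} an embedded cycle's key has strictly fewer zeros than the ambient cycle's key, the cycle of $\tau'w$ cannot be embedded in that of $q$ and must follow it entirely. Second, the genuinely hard case --- $\alpha=0$, where the cycle of $1w$ may be (immediately) embedded in the cycle of $0w$ and one must show the embedding point lies \emph{after} $0w$ --- falls under your cases (i)/(ii), and there you offer only ``the word $q$ turns out to sit early enough in $C_m$'' and a list of lemmas that ``should force'' the conclusion. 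The paper's proof of precisely this case is the bulk of Proposition~\ref{prop:reverse-rule}: assuming $1w<0w$, it invokes the second item of Lemma~\ref{embedding-properties} to pin down the form of $0w$, equates two explicit rotations, counts occurrences of $1$ to force $\sigma{+}1=1$, and then derives a contradiction from a co-lex comparison of the resulting suffixes ($j\leq p$ versus $j_2=p{+}1$). No step of that argument, nor any substitute for it, appears in your proposal, so the proof has a genuine gap at its combinatorial core.
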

	Recall that $\pmx(n,k)$ is the only DB sequence that (1) starts with $0^{n{-}1}(k{-}1)$ ,
	 and (2) $w(\tau{+}1)$ appears in it before $w\tau$ for every $w\in [k]^{n{-}1}$ and $\tau \in [k]$. Hence, to prove Theorem~\ref{thm:construction-correctness}, we shall prove the symmetric property: (1) $D(n,k)$ ends in $(k{-}1)0^{n-1}$, and (2) $\tau w<(\tau{+}1)w$. The former was already obtained in Corollary~\ref{cor:Dnk-ends-in-k-1}, and we focus on proving the later. 
	
	%We start with an intermediate result towards this goal. 
	%We show that $D$ does not switch order between cycles. 
	%{\color{blue} We show that the first elements of the cycles appear in $D$ in the same order as of the cycles:} {\color{red} Gal: as of the cycles? Maybe: "in the same order as they were inserted to $D$"?}

	%Now, we can finally prove the main proposition, which implies the correctness of Theorem~\ref{thm:construction-correctness}.

	\begin{proposition}
		\label{prop:reverse-rule}
		For any word $\tau w$, $\tau w<(\tau{+}1)w$.
	\end{proposition}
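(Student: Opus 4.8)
I will establish the symmetric characterisation of $\rpmx$, namely that $\tau w$ precedes $(\tau{+}1)w$ in $D(n,k)$. I begin with two reductions. Since $\tau w$ and $(\tau{+}1)w$ have different letter‑multisets (the single $\tau$ in one is a $\tau{+}1$ in the other), they are not rotations of one another, so they lie in distinct cycles $C_m\ni\tau w$ and $C_r\ni(\tau{+}1)w$; moreover $m<r$. Indeed, writing $\key_m$ for the $\colex$‑maximal rotation of $\tau w$, the fact that raising one letter of a word strictly increases its $\colex$‑value shows that the corresponding rotation of $(\tau{+}1)w$ is already $\colex$‑larger than $\key_m$, hence $\key_m<_\colex\key_r$, i.e. $m<r$. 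Because $m<r$, the cycle $C_m$ is not embedded in $C_r$ (an embedded cycle is inserted later). Thus, by the Parenthesis Property (Corollary~\ref{cor:parenthesis}), either $\last(C_m)<\first(C_r)$ — and then, since $\tau w\in C_m$ and $(\tau{+}1)w\in C_r$,
\[ \tau w\ \le\ \last(C_m)\ <\ \first(C_r)\ \le\ (\tau{+}1)w, \]
so we are done — or $C_r$ is embedded in $C_m$, which is the only remaining case.

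So assume $C_r$ is $t$‑embedded in $C_m$. By Corollary~\ref{cor:embedding-deletes-zeroes} we may write $\key_r=uv$ and $\key_m=0^{|u|}v$, where $u$ is the shortest prefix of $\key_r$ with $t$ nonzero letters. Comparing the letter‑multisets of $\tau w$ (a rotation of $\key_m$) and of $(\tau{+}1)w$ (a rotation of $\key_r$), and using that these two words differ only in their first letter, I obtain that the multiset of letters of $u$ consists of $|u|{-}1$ zeros and a single $\tau{+}1$; in particular $t=1$ (so $C_r$ is immediately embedded in $C_m$), the letter that gets raised must be a zero so $\tau=0$, and, since $u$ is the shortest prefix of $\key_r$ with one nonzero letter, $u=0^a(\tau{+}1)=0^a1$ with $a:=|u|{-}1$. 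Hence $\key_r=0^a1v$ and $\key_m=0^{a+1}v$, and by the Construction Rule (Definition~\ref{def:construction-rule}) the cycle $C_r$ is inserted immediately after the word $x_r:=0v0^a$, which is a rotation of $\key_m$ and therefore belongs to $C_m$. Since the elements of $C_m$ appear in $D$ in their cyclic order and $\first(C_r)$ is the successor of $x_r$, it now suffices to show that $\tau w=0w$ occurs no later than $x_r$ within $C_m$.

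Suppose not, so $0w$ occurs strictly after $x_r$, hence also after $\last(C_r)$ (the span of the embedded cycle $C_r$ contains no further element of $C_m$). Write $v=0^bv'$ with $v'$ nonempty and not beginning with $0$ (the case $v'=\varepsilon$, i.e. $\key_m=0^n$, being trivial); then $\key_r=0^a1\,0^bv'$ and $\key_m=0^{a+1+b}v'$. Applying the second item of Lemma~\ref{embedding-properties} to $0w\in C_m$, which lies after $\last(C_r)$, gives $0w=0^{j_2}v'0^{a+1+j_1}$ with $j_1+j_2=b$, and $j_2\ge1$ since $0w$ begins with $0$. Consequently $1w=1\,0^{j_2-1}v'0^{a+1+j_1}$ must be a rotation of $\key_r=0^a1\,0^bv'$; comparing the multisets of lengths of the maximal cyclic runs of zeros of these two words forces $a+1+j_1=b$ and $j_2-1=a$, so $0w=0^{a+1}v'0^{b}$ and $1w=1\,0^av'0^{b}$. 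To reach a contradiction I would then show that $1\,0^av'0^b$ is not, after all, a rotation of $\key_r=0^a1\,0^bv'$: such a rotation would have to begin at the distinguished $1$ — giving $1\,0^bv'0^a\ne1\,0^av'0^b$ since $b>a$ — or at a $1$ inside $v'$, and this latter possibility is ruled out using the $\colex$‑maximality of $\key_r$ (together with that of $\key_m$).

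The step I expect to be the main obstacle is this last one — excluding $1\,0^av'0^b$ as a rotation of $\key_r$; everything preceding it is routine bookkeeping built on the nesting lemmas of Sections~\ref{sec:infinite} and~\ref{sec:containment}.
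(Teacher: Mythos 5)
Your argument tracks the paper's own proof quite closely: reduce via the Parenthesis Property to the case where $C_r$ is embedded in $C_m$; use the zero-count of $\tau w$ versus $(\tau{+}1)w$ together with Corollary~\ref{cor:embedding-deletes-zeroes} to force immediate embedding, $\tau=0$, and a key-word of the form $\key_r=0^a10^bv'$; assume for contradiction that $0w$ lies after $\last(C_r)$; apply the second item of Lemma~\ref{embedding-properties} to pin down the form of $0w$; and aim for a contradiction with the $\colex$-maximality of $\key_r$. Your cyclic run-length (gap-multiset) computation replaces the paper's Equations~(1)--(4) but reaches the same point, namely $0w=0^{a+1}v'0^{b}$, $1w=10^{a}v'0^{b}$ with $b>a$.

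The problem is that you stop exactly where the contradiction has to be delivered, and you flag it yourself: you never actually show that $10^{a}v'0^{b}$ cannot be a rotation of $\key_r=0^a10^bv'$. The case split you sketch does not close this: ruling out the rotation that starts at the distinguished $1$ only shows $1w$ is not \emph{that particular} rotation, and the ``$1$ inside $v'$'' case is merely asserted to follow from $\colex$-maximality. Moreover, for instance with $v'=1$ the word $10^a10^b$ \emph{is} a rotation of $0^a10^b1$, so the needed contradiction genuinely must come from maximality of the key-word and not from a combinatorial impossibility of the rotation itself. The missing step does admit a one-line repair that needs no case analysis: $0^b10^av'$ is a rotation of $1w$ (move the trailing $0^b$ to the front), hence a rotation of $\key_r$; since $b>a$ and $v'$ begins and ends with a nonzero symbol, comparing from the right gives $0^b10^av'>_{\colex}0^a10^bv'=\key_r$, contradicting the fact that $\key_r$ is $\colex$-maximal among its rotations. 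This is in substance how the paper finishes (its Equations~(4)--(5) extract $j\le p$ from the $\colex$-maximality of $\key_r$ and contradict $j_2=p{+}1$). As submitted, though, the proof is incomplete at its decisive step.
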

	\begin{proof}
		Let $C_r$ be the cycle of $(\tau{+}1)w$. We start by proving the claim for the restricted case $(\tau{+}1)w=\last(C_r)$. Write $\key(C_r)=0^l(\sigma{+}1)w'$ and hence, $\last(C_r)=(\sigma{+}1)w'0^l=(\tau{+}1)w$, and $\first(C_r)=w'0^l(\sigma{+}1)$. By the construction rule, $C_r$ is inserted after $\sigma w'0^l=\tau w$, and the required follows.

		We turn to deal with the general case in which $(\tau{+}1)w\neq \last(C_r)$. Therefore, we may write $\key(C_r)=0^l(\sigma{+}1)w_1(\tau{+}1)w_2$, where 
		\begin{equation}
			\label{eq:not-last}
			(\tau{+}1)w=(\tau{+}1)w_20^l(\sigma{+}1)w_1.
		\end{equation}
\
			Let $C_m$ be the cycle of $\tau w$. Clearly, the maximal rotation of $\tau w$ is $\colex$ smaller than the maximal rotation of $(\tau{+}1)w$. Consequently, $\key_m<_\colex \key_r$ and thus $m<r$.
		%every rotation of $\tau w$ is $\rlex$-smaller than some rotation of $(\tau+1)w$, {\color{blue}this holds specifically to the keys in each cycle (which are co-lex maximal in each cycle), and thus }we have $m<r$. 
		Hence, by Proposition~\ref{prop:first-in-order}, 
		$\first(C_m)<\first(C_r)$.
		
		Now, if $\last(C_m)<\first(C_r)$, then every element of $C_m$ precedes every element of $C_r$ and we are done. Otherwise, by the parenthesis property, $C_r$ is embedded in $C_m$. For $\sigma\in[k]$,  let $|w|_\sigma$ denote the number of occurrences of  $\sigma$ in $w$, and note that $|\tau w|_0{-}|(\tau+1)w|_0\in\{0,1\}$. Use Corollary~\ref{cor:embedding-deletes-zeroes} to conclude that $|\tau w|_0{-}|(\tau{+}1)w|_0=1$ and that $C_r$ is immediately embedded in $C_m$. Moreover, as $|\tau w|_0{-}|(\tau{+}1)w|_0=1$, we have $\tau=0$. Hence, by Equation~\ref{eq:not-last},
		\begin{equation}
			\label{eq:tau-w-first-form}
			\tau w = 0w_20^l(\sigma{+}1)w_1.
		\end{equation}
		
		Furthermore, we get that the key of the cycle that includes $(\tau{+}1)w$, $\key(C_r)=0^l(\sigma{+}1)w_11w_2$. Write $w_1=0^jw_1'0^i$ and $w_2=0^p w_2'$ where $w_1'$ and $w_2'$ do not start or end with zero. Therefore, we have
		$\key(C_r)=0^l(\sigma{+}1)0^jw_1'0^i10^pw_2'$.
		
		% (cm ..  .. (Cr .. 1w.. )  ..  0w .. )
		
		Assume towards a contradiction that $1w=(\tau{+}1)w<\tau w=0w$. Recall that $\tau w=0w\in C_m$ and $C_r$ is embedded in $C_m$, and conclude (based on our assumption that $1w = (\tau {+} 1) w  < \tau w = 0w$) that the last element of $C_r$ must also appear before $0w$ : $1w \leq \last(C_r)<\tau w=0 w\leq \last(C_m)$. Therefore, by Lemma~\ref{embedding-properties}, 
		
		\begin{equation}
			\label{eq:j2<j}
			\tau w=0w=0^{j_2}w_1'0^i10^pw_2'0^{l{+}1{+}j_1}, \text{ where } 
			j_1{+}j_2=j.
		\end{equation}
		By Equation~\ref{eq:tau-w-first-form} and~\ref{eq:j2<j}, since $w_1=0^jw_1'0^i$ and $w_2=0^p w_2'$ ,we have:
		\begin{equation}
			\label{1}
			0^{j_2}w_1'0^i10^pw_2'0^{l{+}1{+}j_1}=0^{p{+}1}w_2'0^l(\sigma{+}1)0^jw_1'0^i.
		\end{equation}
		Therefore, $|0^{j_2}w_1'0^i10^rw_2'0^{l+1+j_1+1}|_1 = |0^{p+1}w_2'0^l(\sigma+1)0^jw_1'0^i|_1$ and thus $\sigma{+}1=1$. Hence,
		
		\begin{equation}
			\label{2}
			\key(C_r)=0^l10^jw_1'0^i10^pw_2'
		\end{equation}
		
		and Equation~\ref{1} can be rewritten as follows:
		
		\begin{equation}
			\label{3}
			0^{j_2}w_1'0^i10^pw_2'0^{l{+}1{+}j_1}=0^{p{+}1}w_2'0^l10^jw_1'0^i.
		\end{equation}
		
		For the remainder of the proof we assume that $w_1'\neq\varepsilon$ and $w_2'\neq \varepsilon$. The other cases are dealt similarly.

		By deleting the initial and final segments of zeros, we get from Equation \ref{3},
		\begin{equation}
			\label{4}
			j_2=p{+}1, \ \ w_1'0^i10^pw_2'=w_2'0^l10^jw_1'.
		\end{equation}
		Now, by Equation~\ref{2}, 
		\begin{equation}
			\label{5}
			0^i10^pw_2'0^l10^{j}w_1'\leq_{\rlex} 0^l10^jw_1'0^i10^pw_2'.
		\end{equation}
		
		By Equation~\ref{4}, these words have the same suffix  thus $0^i10^p \leq_{\rlex} 0^l10^j$.
		Hence, $j\leq p$. Therefore, by Equation~\ref{eq:j2<j}, $j_2\leq p$, in contradiction to Equation~\ref{4}.
	\end{proof}
	
	Finally, Theorem~\ref{thm:construction-correctness} follows.
	\begin{proof}[Proof of Theorem~\ref{thm:construction-correctness}]
		Since $D_0=0^n$, and since we insert cycles only after an existing element, $0^n$ is the first element of $D(n,k)$. $\rpmx (n,k)$ is the only sequence that includes all $n$-words (and no other elements), starts with $0^n$, and satisfies $\tau w<(\tau{+}1) w$ for each $\tau\leq k{-}2$ and $w\in[k]^{n{-}1}$. Hence, the theorem is implied by Proposition~\ref{prop:reverse-rule}.
	\end{proof}
	
	\section{Properties of Prefer-Max Implied by Our Construction}
	\label{sec:implications}
	
	We present applications induced by our construction. Specifically, first, we prove that $\rpmx$ is in fact an infinite De Bruijn sequence. Second, we extract from the construction the shift rule for \rpmx, proposed in~\cite{AmramRSW20}. Finally, as noted in~\cite{AmramRSW20}, this shift rule provides an alternative proof for the FKM-theorem.
	
	\subsection{The Onion Theorem}
	
	%An $(n,k)$-DB is an ordering of $[k]^n$. Likewise, one may consider an infinite De Bruijn (infinite-DB) sequence. That is, an ordering of $\N^n$, $w_0,w_1,\dots$, that satisfies the property: the successor of a word $w_i=\sigma w$ is a word $w_{i{+}1}=w\tau$. We show that our construction implies that $\rpmx$ is in fact such a sequence. This follows from the \emph{onion-theorem}. 
	
	In Section~\ref{sec:infinite} we proved that $D(n)=\bigcup_{k=1}^\infty D(n,k)$ is an infinite De Bruijn sequence. In Section~\ref{sec:correctness} we proved that $D(n,k)=\rpmx(n,k)$. The onion-theorem~\cite{SchwartzSW19,BecherC21} follows.
	
	\begin{theorem}[The Onion-Theorem.]
		\label{thm:onion}
		$\rpmx(n)=\bigcup_{k=1}^\infty \rpmx(n,k)$ is an infinite De Bruijn sequence. 
	\end{theorem}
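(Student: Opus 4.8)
The plan is to assemble Theorem~\ref{thm:onion} from the two pieces already established: the infinitary result of Section~\ref{sec:infinite} and the identification result of Section~\ref{sec:correctness}. First I would recall that $D(n)=\bigcup_{k=1}^\infty D(n,k)$ was shown in Theorem~\ref{thm:D-infinite} to be an infinite De Bruijn sequence, i.e.\ a total ordering of $\mathbb N^n$ satisfying the successor property (1). Next I would invoke Theorem~\ref{thm:construction-correctness}, $D(n,k)=\rpmx(n,k)$, which holds for every $k>0$. Since, by definition, $\rpmx(n)=\bigcup_{k=1}^\infty \rpmx(n,k)$ as a relation on $\mathbb N^n$, and since the union of the graphs of the orderings $D(n,k)$ equals the union of the graphs of the orderings $\rpmx(n,k)$ term by term, we get $\rpmx(n)=D(n)$ as relations on $\mathbb N^n$.

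The one point that needs a word of care is that the ``union'' $\bigcup_{k} D(n,k)$ is being taken in the sense spelled out just before Theorem~\ref{thm:D-infinite}: it is the relation $(\mathbb N^n,<)$ where $w_1<w_2$ iff $w_1<w_2$ in some $D(n,k)$, and this is well defined precisely because $D(n,k)$ is a prefix of $D(n,k{+}1)$ (the content of Theorem~\ref{thm:D-infinite}'s proof). The same prefix-chain property, transported through the equality $D(n,k)=\rpmx(n,k)$, tells us that $\rpmx(n,k)$ is a prefix of $\rpmx(n,k{+}1)$, so $\rpmx(n)$ is a legitimate, well-defined linear order on $\mathbb N^n$ rather than a possibly inconsistent union. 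Hence $\rpmx(n)=D(n)$, and being equal to an infinite De Bruijn sequence, $\rpmx(n)$ is itself one.

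So the proof is essentially two sentences of citation plus this consistency remark; there is no real obstacle. If anything, the only thing to be careful about is not overclaiming: an infinite De Bruijn sequence here is required to satisfy only condition (1) (the finite-alphabet wrap-around condition (2) has no infinite analogue), and Theorem~\ref{thm:D-infinite} already delivers exactly that, so nothing further is needed.

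\begin{proof}
By Theorem~\ref{thm:construction-correctness}, $D(n,k)=\rpmx(n,k)$ for every $k>0$. Hence the prefix-chain established in the proof of Theorem~\ref{thm:D-infinite}, namely that $D(n,k)$ is a prefix of $D(n,k{+}1)$, transfers verbatim: $\rpmx(n,k)$ is a prefix of $\rpmx(n,k{+}1)$ for every $k$. Consequently $\rpmx(n)=\bigcup_{k=1}^\infty\rpmx(n,k)$ is a well-defined total ordering of $\mathbb N^n$, and term by term it coincides with $D(n)=\bigcup_{k=1}^\infty D(n,k)$, so $\rpmx(n)=D(n)$ as relations on $\mathbb N^n$. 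By Theorem~\ref{thm:D-infinite}, $D(n)$ is an infinite De Bruijn sequence, i.e.\ it satisfies the successor property~(1); therefore so does $\rpmx(n)$. This proves the claim.
\end{proof}
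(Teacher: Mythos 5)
Your proposal is correct and follows exactly the paper's route: the paper likewise obtains Theorem~\ref{thm:onion} by combining Theorem~\ref{thm:D-infinite} ($D(n)$ is an infinite De Bruijn sequence) with Theorem~\ref{thm:construction-correctness} ($D(n,k)=\rpmx(n,k)$), with the prefix-chain property supplying well-definedness of the union. Your extra remark on consistency of the union is a welcome but inessential elaboration of what the paper leaves implicit.
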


	\subsection{An Efficiently Computable Shift Rule.}
	
	By the correctness of our cycle joining construction, we conclude the correctness of the efficient shift rule given in~\cite{AmramRSW20}. For a word $w$, we write $\last(w)$ if $w=\last(C)$ for a cycle $C$. The successor function of $\rpmx(n,k)$ (resp. $\rpmx(n)$)  is ${\suc:[k]^n\setminus\{(k{-}1)0^{n{-}1}\}\rightarrow [k]^n}$ (resp. $\suc:\N^n\rightarrow \N^n$), defined by
	\[\suc(\sigma w)=\begin{cases} 
		w(\sigma{+}1) & \text{if }\last((\sigma{+}1)w) \\
		w0          & \text{if }\neg\last((\sigma{+}1)w) \text{ and }\last(\sigma w) \\
		w\sigma     & \text{otherwise} 
	\end{cases} .\]

	\begin{theorem}[Amram et al.~\cite{AmramRSW20}]
		\label{thm:rpmx-shift-rule}
		$\suc$ is a shift rule for $\rpmx(n,k)$.
	\end{theorem}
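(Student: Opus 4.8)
The plan is to verify that the claimed successor function $\suc$ agrees, on every word, with the actual successor in the De Bruijn sequence $D(n,k)$ (equivalently $D(n)$), and then invoke Theorem~\ref{thm:construction-correctness} to conclude that $\suc$ is the shift rule for $\rpmx(n,k)$. Fix a word $\sigma w$ with $|w|=n{-}1$, let $C$ be the cycle containing $\sigma w$, and let $C'$ be the cycle containing $(\sigma{+}1)w$. The three candidate successors $w(\sigma{+}1)$, $w0$, $w\sigma$ are exactly the three words of the form $w\xi$ that can occur as a cycle-successor of something: $w\sigma$ is the cycle-successor of $\sigma w$ inside $C$; $w(\sigma{+}1)$ is $\first(C')$ whenever $(\sigma{+}1)w=\last(C')$ (indeed, writing $\key(C')=0^l(\sigma'{+}1)w'$, the condition $(\sigma{+}1)w=\last(C')$ forces $w=w'0^l$ and $\sigma=\sigma'$, so $\first(C')=w'0^l(\sigma{+}1)=w(\sigma{+}1)$); and $w0$ is the word whose cycle $C''$ has $\key(C'')$ obtained from $\key(C)$ by zeroing its first nonzero symbol, which (by Lemma~\ref{embedding-properties}) is precisely the key of the cycle in which $C$ gets immediately embedded — so when $C=\last(C)$ equals $\sigma w$, the construction rule inserts $C$ immediately after $w0$, making $w0$ the predecessor of $\first(C)=\sigma w$... — careful: here $\first(C)$, not the successor. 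Let me instead organize the case analysis around which word the construction rule places right after $\sigma w$ in $D(n,k)$.

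Concretely I would argue as follows. \textbf{Case 1: $\last((\sigma{+}1)w)$ holds,} i.e. $(\sigma{+}1)w=\last(C')$. Then as computed above $\first(C')=w(\sigma{+}1)$, and writing $\key(C')=0^l(\sigma{+}1)w'$ with $w=w'0^l$, the construction rule inserts $C'$ immediately after the word $\sigma w'0^l=\sigma w$. Hence the successor of $\sigma w$ in $D(n,k)$ is $\first(C')=w(\sigma{+}1)$, matching the first branch. \textbf{Case 2: $\neg\last((\sigma{+}1)w)$ but $\last(\sigma w)$ holds,} i.e. $\sigma w=\last(C)$. I claim the successor of $\sigma w=\last(C)$ in $D(n,k)$ is $w0=\first$ of the cycle $C'''$ whose key is the first-nonzero-zeroing of $\key(C)$. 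Indeed, by Lemma~\ref{embedding-properties} (with $C$ in the role of "$C_r$"), the cycle $C_m$ in which $C$ is immediately embedded has $\key(C_m)$ equal to that zeroing, and the construction rule inserts $C$ immediately after the word $\first(C)$'s predecessor — more to the point, since $\last(C)=\sigma w$ and $C$ is embedded in $C_m$, by the parenthesis property (Corollary~\ref{cor:parenthesis}) the successor of $\last(C)$ is an element of $C_m$, and Lemma~\ref{embedding-properties}'s second bullet pins that successor down to be $w0$ (the case $j_1=j_2=0$). The hypothesis $\neg\last((\sigma{+}1)w)$ is exactly what rules out Case~1 from also applying, so there is no conflict. \textbf{Case 3: neither holds.} Then $\sigma w$ is not the last word of its cycle, so its successor in $D(n,k)$ is simply its cycle-successor $w\sigma$, matching the third branch — \emph{provided} no cycle is inserted immediately after $\sigma w$; this is where I must check that when $\sigma w\neq\last(C)$, the construction never inserts a new cycle right after $\sigma w$, i.e. $\sigma w$ is never of the form $\tau v 0^l$ with $0^l(\tau{+}1)v$ a key-word — and if it is, then that inserted cycle's last word feeds back to $\sigma w$'s cycle-successor anyway, by the successor-property invariant proved in Theorem~\ref{thm:D(n,k)-DB}. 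Either way the eventual successor of $\sigma w$ is $w\sigma$.

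The main obstacle is \textbf{Case 3 and the interface between the branches}: I need the three conditions "$\last((\sigma{+}1)w)$", "$\last(\sigma w)$", and "$\sigma w$ has a cycle inserted after it" to partition cleanly, and I need to be sure that in Case~3 the cycle-successor $w\sigma$ really is the $D$-successor even though cycles may be embedded between $\sigma w$ and $w\sigma$ (resolved by the successor-property invariant: whatever cycle is inserted right after $\sigma w$ has a last word of the form $w\xi$, and ends exactly where the original successor $w\sigma$ would be, so they coincide). A secondary subtlety is confirming that $\suc$ is well-defined and total (domain excludes $(k{-}1)0^{n{-}1}$, which is $\last(D(n,k))$ by Corollary~\ref{cor:Dnk-ends-in-k-1}), and that the infinite version over $\N^n$ works because $D(n,k)$ is a prefix of $D(n,k{+}1)$ (Theorem~\ref{thm:D-infinite}), so the rule is independent of $k$. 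Once the case analysis closes, Theorem~\ref{thm:construction-correctness} gives $D(n,k)=\rpmx(n,k)$ and the result follows.
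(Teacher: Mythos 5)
Your overall route is the same as the paper's --- a three-way case analysis matching the branches of $\suc$, resting on the construction rule, Proposition~\ref{prop:first-in-order}, the parenthesis property and Lemma~\ref{embedding-properties} --- but there is a genuine gap, concentrated precisely where you flag ``the main obstacle''. The missing idea is a uniqueness observation about insertions: since every key-word $0^l(\sigma{+}1)v$ has $v$ ending in a non-zero symbol (or empty), a given word $\sigma w$ admits exactly one decomposition $\sigma w=\sigma v0^l$ of the required shape, so \emph{at most one} cycle is ever inserted immediately after $\sigma w$ during the entire construction, namely the cycle with key $0^l(\sigma{+}1)v$, whose first word is $w(\sigma{+}1)$ and whose last word is $(\sigma{+}1)w$. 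Hence $\last((\sigma{+}1)w)$ holds if and only if some cycle gets inserted right after $\sigma w$. This is needed in all three cases: in Case~1 to rule out a later insertion displacing $w(\sigma{+}1)$; in Case~2 to show that the successor of $\last(C)$ is \emph{not} the first word of a freshly inserted cycle, which is what forces $C$ to be embedded in the cycle of its successor (you assume this embedding exists without argument, and it can fail --- e.g.\ $\last(C_1)=100$ in $D(3,3)$ has successor $\first(C_4)=002$, a Case-1 situation --- so the hypothesis $\neg\last((\sigma{+}1)w)$ is doing real work here, not merely keeping the branches disjoint); and in Case~3 to show that nothing at all is inserted after $\sigma w$.

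Moreover, your fallback for Case~3 is incorrect: if a cycle $C'$ were inserted immediately after $\sigma w$, the successor of $\sigma w$ would be $\first(C')=w(\sigma{+}1)$, not $w\sigma$. The successor-property invariant only says that the word following $\last(C')$ is $w\sigma$, which says nothing about what follows $\sigma w$ itself, so ``either way the eventual successor of $\sigma w$ is $w\sigma$'' is false; the correct resolution is that $\neg\last((\sigma{+}1)w)$ forbids any insertion after $\sigma w$, by the uniqueness observation above. A smaller issue in Case~2: once the embedding is established, Lemma~\ref{embedding-properties} tells you that every element of the enclosing cycle occurring after $\last(C)$ ends in $0$; combined with the successor property (the successor has the form $w\xi$) this yields $w0$ --- you do not get to select $j_1=j_2=0$. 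With the uniqueness observation supplied and Case~3's fallback replaced by the resulting contradiction argument, your proof closes and essentially coincides with the paper's.
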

	\begin{proof}
		%Note that, by Theorem~\ref{thm:onion}, it sufficient to prove  only for $\rpmx(n,k)$. Take $\sigma w\in [k]^n\setminus\{(k{-}1)0^{n-1}\}$. Observe that the claim holds for $0^n$, and thus we assume $\sigma w\neq 0^n$.
		
		First, assume that $\last((\sigma{+}1)w)$. Let $C$ be the cycle of $(\sigma{+}1)w$, and note that $\first(C)=w(\sigma{+}1)$. Hence, by Definition~\ref{def:construction-rule},  $C$ was inserted after $\sigma w$. Furthermore, by the construction rule, no other cycle $C'$ was inserted after $\sigma w$ afterwards and thus the successor of $\sigma w$ is $w(\sigma{+}1)$.  
		
		Now, we handle the second case: $\neg\last((\sigma{+}1)w)$  and $\last(\sigma w)$. Write $\sigma w\in C_m$, and hence, $\first(C_m)=w\sigma$.
		%{\color{blue} We believe this can be straight forward using Lemma 18 (using the part of the proof by Gal from below) - since $\sigma w$ is last and $w (\sigma{+}1)$ (which is the only candidate to start a cycle after $\sigma w$) is not first, we are directly embedded in another cycle.  Then, $u$ which is the successor of $\sigma w$, must end with $0$.}
		%{\color{red} I think you are right. What about the following:
			Let $w\tau$ be the successor of $\sigma w$. First, we argue that $\neg\first(w\tau)$. Assume towards a contradiction that $\first(w\tau)$. Hence, by the construction rule, the cycle of $w\tau$ was inserted after $(\tau{-}1)w$. Use Proposition~\ref{prop:first-in-order} to conclude that no cycle was inserted between $(\tau{-}1)w$ and $w\tau$. Therefore, $\tau{-}1=\sigma$ and $\last (w (\sigma{+}1))$ follows, in contradiction to the assumption.  
			Now, since $\neg\first(w\tau)$,  the cycle of $\sigma w$ is immediately embedded in the cycle of $w\tau$. By Lemma~\ref{embedding-properties}, $\tau=0$ as required. 
		%}
		
	%	{\color{green}First, we prove by induction on $\sigma$ that $w0$ is the successor of $\sigma w$ in $D_m$. The induction basis is $\sigma=1$, as $\last(\sigma w)$ implies $\sigma\neq 0$. In this case, by the construction rule, $C_m$ is inserted after $0w$, which is the predecessor of $w0$ in its cycle.\footnote{This claim holds unless $w=0^{n{-}1}$, but $\neg\last((\sigma{+}1)w)$ implies $w\neq 0^{n{-}1}$.} To conclude that $w0$ is the successor of $\sigma w$ in $D_m$, assume otherwise, and conclude that some cycle $C_{m'}$, where $m'<m$, was inserted after $0w$ in an earlier stage. This, however, contradicts Proposition~\ref{prop:first-in-order}.	Now, for the induction step, take $\sigma>1$ and note that $C_m$ was inserted after $(\sigma{-}1)w$. By the induction hypothesis, $w0$ is the successor of $(\sigma{-}1)w$ in $D_{m'}$, for some $m'<m$. 	Use Lemma~\ref{lem:decrease-by-one } to conclude that $(\sigma{-}1)w=\last(C_{m'})$. By the same reasoning as for the base case, no cycle was inserted after $\last(C_{m'})$ before stage $m$, and the claim follows.}
		
%		{\color{red} \sout{To complete the proof for the second case, we need to show that no cycle $C_r$, where $r>m$, was  inserted immediately after $\last(C_m)$. Assume otherwise, and hence, by the construction, $\first(C_r)=w(\sigma{+}1)$. As a result, $\last(C_r)=w(\sigma{+}1)$, in contradiction to $\neg\last(w(\sigma{+}1))$.} }
		
		Lastly, we deal with the third case. Hence, $\neg\last(\sigma w)$ and thus the successor of $\sigma w$ in its cycle is $w\sigma$. Therefore, we should verify that no cycle $C_m$ was inserted between $\sigma w$ and $w\sigma$. Assume otherwise, and conclude that $\first(C_m)=w(\sigma{+}1)$. Hence, $\last(C_m)=(\sigma{+}1)w$, in contradiction to the case we are dealing with.  
	\end{proof}
	
	By Theorems~\ref{thm:onion} and ~\ref{thm:rpmx-shift-rule}, we conclude,
	
	\begin{theorem}
		$\suc$ is a shift rule for $\rpmx(n)$.
	\end{theorem}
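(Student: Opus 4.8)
The plan is to reduce the infinite case to the finite case that was already established in Theorem~\ref{thm:rpmx-shift-rule}, exploiting the onion structure proved in Theorem~\ref{thm:onion}. The key conceptual point is that the successor function $\suc:\N^n\to\N^n$ is defined purely in terms of the predicates $\last(\cdot)$ and the combinatorial structure of cycles, and these are intrinsic: whether a word $u\in[k]^n$ equals $\last(C)$ for some cycle $C$ depends only on $u$ (it is the statement that the key-word rotation of $u$ is $0^l(\rho{+}1)w'$ and $u=(\rho{+}1)w'0^l$), and in particular it does not change when we pass from alphabet $[k]$ to $[k{+}1]$. So the same predicate $\last$ is used in the finite rule for each $\rpmx(n,k)$ and in the purported infinite rule for $\rpmx(n)$.

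First I would fix an arbitrary word $\sigma w\in\N^n$ and let $k$ be large enough that $\sigma w\in[k]^n$ and also that $\suc(\sigma w)\in[k]^n$ — concretely, any $k$ exceeding every symbol appearing in $\sigma w$ by at least one suffices, since each of the three candidate outputs $w(\sigma{+}1)$, $w0$, $w\sigma$ uses only symbols already in $\sigma w$ together with possibly $\sigma{+}1$. Second, I would observe that the value of $\suc(\sigma w)$ as defined by the displayed case analysis is literally the same whether we compute it thinking of the alphabet as $[k]$ or as $\N$: the branch selected depends only on $\last((\sigma{+}1)w)$ and $\last(\sigma w)$, which are alphabet-independent as noted above, and the output word is the same in either reading. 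Third, by Theorem~\ref{thm:rpmx-shift-rule}, $\suc$ restricted to $[k]^n\setminus\{(k{-}1)0^{n{-}1}\}$ is the successor function of $\rpmx(n,k)=D(n,k)$; hence $\suc(\sigma w)$ is exactly the word following $\sigma w$ in $D(n,k)$ whenever $\sigma w\neq(k{-}1)0^{n{-}1}$.

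Fourth, I would invoke Theorem~\ref{thm:onion} (equivalently the already-proven facts that $D(n,k)$ is a prefix of $D(n,k{+}1)$ and that $\bigcup_k D(n,k)=\rpmx(n)=D(n)$): since $D(n,k)$ is a \emph{prefix} of $D(n)$ and $\sigma w$ is not its last word, the successor of $\sigma w$ in $D(n,k)$ coincides with the successor of $\sigma w$ in the infinite sequence $D(n)$. Combining the third and fourth steps, $\suc(\sigma w)$ equals the $D(n)$-successor of $\sigma w$, which is what it means for $\suc$ to be a shift rule for $\rpmx(n)$. Since $\sigma w$ was arbitrary, this finishes the proof. The only mild subtlety — and the step I would be most careful about — is pinning down that $\sigma w$ is genuinely not the last word of the prefix $D(n,k)$ for the chosen $k$: by Corollary~\ref{cor:Dnk-ends-in-k-1} that last word is $(k{-}1)0^{n{-}1}$, so I must choose $k$ strictly larger than any symbol of $\sigma w$ \emph{and} large enough that $\sigma w\neq(k{-}1)0^{n{-}1}$, e.g. $k\geq 2+\max\{\text{symbols in }\sigma w\}$; with that choice $\sigma w\in[k{-}1]^n\subseteq D(n,k{-}1)$, which is a proper prefix of $D(n,k)$, so $\sigma w$ is comfortably in the interior and has a well-defined successor there that agrees with its successor in $D(n)$.
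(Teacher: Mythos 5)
Your proof is correct and takes essentially the same approach as the paper: the paper's entire proof of this theorem is the single line ``By Theorems~\ref{thm:onion} and~\ref{thm:rpmx-shift-rule}, we conclude,'' and your argument is exactly the elaboration of that reduction --- alphabet-independence of the predicate $\last$, application of the finite shift rule over a sufficiently large $[k]$, and the prefix property supplied by the onion theorem. Your care in choosing $k$ large enough that $\sigma w\neq(k{-}1)0^{n{-}1}$ handles the one edge case the paper leaves implicit.
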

	
	\subsection{The FKM Theorem}

	Following an observation from~\cite{AmramRSW20}, our results form an alternative proof for the seminal FKM-theorem (Theorem~\ref{thm:FKM}) as follows. For $n,k>0$, let $\mathit{next}:[k]^n\setminus \{0(k{-}1)^n\}\rightarrow[k]^n$ be the function constructed from $\mathit{succ}$ by the next rule: if $\mathit{succ}(\sigma_1\cdots\sigma_n)=\sigma_2\cdots\sigma_{n+1}$, then
	\[\mathit{next} ((k{-}1){-}\sigma_1\cdots,(k{-}1){-}\sigma_n)= ((k{-}1){-}\sigma_2\cdots, (k{-}1){-}\sigma_{n+1} )).\]
	Hence, $\mathit{next}$ is a shift rule for $\rpmn(n,k)$. Now, let $\mathit{next}^{-1}$ be the function constructed from $\mathit{next}$ by the next rule: if $\mathit{next}(\sigma_1\cdots\sigma_n)=\sigma_2\cdots\sigma_{n+1}$, then
	\[\mathit{next}^{-1}(\sigma_{n+1}\cdots\sigma_2)=\sigma_n\cdots\sigma_1.\]
	Hence, $\mathit{next}^{-1}$ is a shift rule for $\pmn(n,k)$. 
	We leave for the reader to verify that $\mathit{next}^{-1}$ is the shift rule proposed in~\cite{AmramARSSW19} (details can also be found in~\cite{AmramRSW20}).
	
	Now, let $L_0,L_1,\dots$ be an enumeration of all Lyndon words over $[k]$ whose length divides $n$, ordered lexicographically. Therefore, according to the proof of Theorem 4 in~\cite{AmramARSSW19}, $\mathit{next}^{-1}$ constructs the sequence $L_0L_1\cdots$, which implies that $\pmn(n,k)=L_0L_1\cdots$.

	\section{Conclusion}
	\label{sec:conclusion}
	
	For all $n,k>0$, we presented a cycle joining construction for the reverse of prefer-max sequence, $\rpmx(n,k)$. Since the sequences $\pmx(n,k)$, $\pmn(n,k)$, and $\rpmn(n,k)$ can be derived from $\rpmx(n,k)$, our construction can be modified into a cycle joining construction of any of those sequences.
	
	We showed that our construction implies the correctness of the \emph{onion-theorem}. That is, for all $n,k>0$, $\rpmx(n,k)$ is a prefix of $\rpmx(n,k{+}1)$, and thus $\rpmx(n)$ is an infinite DB sequence. Moreover, we showed that our construction also implies the correctness of the shift rules 
	given in $\cite{AmramRSW20}$. These shift rules are efficiently computable~\cite{AmramARSSW19, AmramRSW20}.

	As a result, our construction also implies the seminal FKM-theorem (Theorem~\ref{thm:FKM}). This theorem was presented in~\cite{FredricksenM1978} with only a partial proof: the described concatenation of  Lyndon words constructs a De Bruijn sequence. A quarter-century later, Moreno gave an alternative proof to that fact~\cite{moreno2004theorem}, and only a decade later, extended the proof, together with Perrin, into a complete proof for the FKM theorem~\cite{moreno2015corrigendum}. 
	Amram et al.~\cite{AmramRSW20} proved that the shift rule given in Section~\ref{sec:implications}, combined with statements proved in~\cite[Theorem~4]{AmramARSSW19} provide an alternative proof for Theorem~\ref{thm:FKM}. Hence, our cycle joining construction also constitutes an alternative proof for the FKM-theorem.

	\bibliographystyle{plain}
	\bibliography{bib}
	
\end{document}